\documentclass[letterpaper,10pt,conference]{ieeeconf}

\IEEEoverridecommandlockouts  
\usepackage{amsmath}
\usepackage{amssymb}
\usepackage{amsthm}
\usepackage{mathtools}
\usepackage{derivative}
\usepackage{xcolor}
\usepackage{bm}

\usepackage[noadjust]{cite}
\usepackage{url}

\usepackage{graphicx}
\usepackage[export]{adjustbox} % For aligning figures
\graphicspath{{figures/}}

\newtheorem{theorem}{Theorem}

\newtheorem{proposition}{Proposition}

\theoremstyle{definition}

\newtheorem{remark}{Remark}
\newtheorem{problem}{Problem}

\newtheorem{example}{Example}

\newcommand{\argmin}{\operatornamewithlimits{arg\,min}}

\newcommand{\R}{\mathbb{R}}

\renewcommand{\bf}{\mathbf{f}} % NOTE: use \textbf if you really want to write bold non-math text.

\usepackage{xcolor}
\definecolor{myblue}{RGB}{49, 114, 174}
\definecolor{myred}{rgb}{0.796, 0.235, 0.2}
\definecolor{mygreen}{rgb}{0.22, 0.596, 0.149}
\definecolor{mypurple}{rgb}{0.584,0.345,0.698}

\usepackage{blindtext}

\usepackage{amsfonts}
\usepackage{hyperref}
\usepackage{comment}
\title{\textbf{
Structure, Feasibility, and Explicit Safety Filters for Linear Systems
}}

\author{Shima Sadat Mousavi$^1$, Max H. Cohen$^2$, Pol Mestres$^1$, and Aaron D. Ames$^1$%
\thanks{$^1$The authors are with the Department of Mechanical and Civil Engineering, California Institute of Technology, Pasadena, CA \texttt{\{smousavi,mestres,ames\}@caltech.edu}.}%
\thanks{$^2$The author is with the Department of Electrical and Computer Engineering, North Carolina State University, Raleigh, NC \texttt{mhcohen2@ncsu.edu}.}%
\thanks{This research was supported by the Boeing Strategic University Initiative.}%
}

\begin{document}
\maketitle
\thispagestyle{empty}
\pagestyle{empty}

% ---------------- Abstract ----------------

\begin{abstract}
Safety filters based on control barrier functions (CBFs) and high-order control barrier functions (HOCBFs) are often implemented through quadratic programs (QPs). In general, especially in the presence of multiple constraints, feasibility is difficult to certify before solving the QP and may be lost as the state evolves. This paper addresses this issue for linear time-invariant (LTI) systems with affine safety constraints. Exploiting the resulting geometry of the constraint normals, and considering both unbounded and bounded inputs, we characterize feasibility for several structured classes of constraints. For certain such cases, we also derive closed-form safety filters. These explicit filters avoid online optimization and provide a simple alternative to QP-based implementations. Numerical examples illustrate the results.
\end{abstract}

\section{Introduction}

Safety-critical control in robotics, autonomous driving, and aerospace
is often implemented through safety filters that modify a nominal input
as little as possible while enforcing state and input constraints.
Control barrier functions (CBFs) and high-order control barrier functions
(HOCBFs) provide a natural framework for such filters, typically through
quadratic programs (QPs) with constraints that are affine in the input
\cite{AmesTAC17,WX-CB:22,GurrietICCPS18}. In practice, however,
multiple safety requirements and actuator limits are enforced
simultaneously, and the resulting QP may be feasible at some states and
infeasible at others. This makes it difficult to certify feasibility
before running the solver, and feasibility may be lost as the state
evolves. Understanding when these QPs are feasible, and when they can be
replaced by explicit controllers, is therefore important for offline
certification, real-time implementation, and closed-loop safety.

%A broad literature addresses safety and constraints through CBFs, predictive control, reference governors, and reachability methods \cite{BorelliBemporadMorari,KolmanovskyAutomatica17,BansalCDC17,AmesCSM23,FisacARCRAS23}. Within the CBF literature, multiple constraints have been treated using composite barrier constructions, online combination rules, and case-specific compatibility analyses \cite{GlotfelterLCSS17,LarsLCSS19,TamasLCSS23,XuAutomatica18, CortezLCSS22,CortezAutomatica22,TanCDC22,BreedenACC23,IsalyTAC24, mousavi2025vertices}. These works provide useful sufficient conditions and specialized procedures, but exact feasibility characterization for stacked affine-in-input constraints remains limited, especially under actuator bounds.

A broad literature addresses safety and constraints through CBFs,
predictive control, reference governors, and reachability methods
\cite{BorelliBemporadMorari,KolmanovskyAutomatica17,BansalCDC17,AmesCSM23,FisacARCRAS23}.
Within the CBF literature, multiple constraints have been studied using
composite barrier constructions, online combination rules, and
case-specific compatibility analyses
\cite{GlotfelterLCSS17,LarsLCSS19,TamasLCSS23,XuAutomatica18,
CortezLCSS22,CortezAutomatica22,TanCDC22,BreedenACC23,IsalyTAC24,
mousavi2025vertices}. More generally, feasibility of a collection of
CBF inequalities can be formulated as an auxiliary optimization problem
\cite{mestres2025safe}; however, when the coefficients of the affine-in-input
constraints depend nonlinearly on the state, the resulting optimization problem is non-convex and does not readily
yield an explicit characterization of the feasible state set. Consequently,
exact feasibility characterization for stacked affine-in-input constraints
remains limited, especially under actuator bounds.

%In \cite{lavretsky2025servo}, min-norm state-feedback controllers are developed for LTI systems with operational limits by imposing componentwise bounds on a selected output of the same dimension as the control input. In \cite{cohen2025compatibility}, it is shown that, for box-constrained vector outputs of square nonlinear systems with vector relative degree and invertible decoupling matrix, the associated multi-CBF QP is compatible and admits a closed-form solution. These results exploit useful structure, but leave open an exact feasibility analysis for stacked ECBF constraints in linear systems, especially in non-square settings and under both unbounded and bounded inputs.

Motivated in part by this difficulty, \cite{lavretsky2025servo,cohen2025compatibility}
identify special structure leading to tractable feasibility and controller
designs. In \cite{lavretsky2025servo}, min-norm state-feedback controllers are
developed for LTI systems with operational limits by imposing componentwise
bounds on a selected output of the same dimension as the control input. In
\cite{cohen2025compatibility}, it is shown that for box-constrained vector
outputs of square nonlinear systems with vector relative degree and invertible
decoupling matrix, the associated multi-CBF QP is compatible and admits a
closed-form solution. These results, however, leave open an exact feasibility
analysis for stacked affine-in-input barrier constraints in linear systems,
especially in non-square settings and under both unbounded and bounded inputs.

This paper addresses that gap for LTI systems with
affine safety functions. In this setting, the resulting barrier constraints
have constant normals and affine state-dependent offsets. The framework
includes, as special cases, the settings in
\cite{lavretsky2025servo,cohen2025compatibility}, while allowing non-square
problems and more general constraint geometries than box bounds. This is
important in applications such as aircraft and robotics, where multiple safety
constraints may exceed the number of available inputs. We exploit this
structure to characterize feasibility directly from the geometry of the
constraint normals and to derive explicit safety filters in structured cases.
The work is also related to recent results on explicit CBF-QPs
\cite{mestres2025explicit}, where the closed-form solution is region-wise;
here, by contrast, the structured cases admit a single unified closed-form
filter. The main contributions are:
\begin{itemize}
\item We characterize the feasible state set for LTI CBF-QPs with
affine-in-input constraints and polyhedral input bounds, starting from a
general feasibility condition and then refining it for structured
constraint families (Prop.~\ref{prop:farkas},
Theorem~\ref{thm:parallel_constraints}, and
Prop.~\ref{prop:parallel_feasible_set}, Prop.~\ref{prop:independent_constraints}).
\item We identify useful geometric structures---parallel families,
independent blocks, and dependent directional constraints---that yield
tractable feasibility tests and reveal hidden redundancies
(Theorem~\ref{thm:block_constraints}, and
Prop.~\ref{prop:box_direction_certificate}).
\item For structured cases, we derive explicit closed-form safety filters that
replace the QP with simple saturation laws
(Prop.~\ref{prop:parallel_filter} and
Prop.~\ref{prop:block_filter}) and match the corresponding QP solutions
numerically.
\end{itemize}

Overall, we provide a geometric perspective on feasibility
for multi-constraint CBF-QPs in linear systems, with explicit
implementations whenever the constraint structure permits.

\section{Problem Formulation}
\label{sec:problem}

\noindent \textbf{Notation.}
For a vector \(v\in\mathbb{R}^p\), \(v_i\) denotes its \(i\)th entry. For \(T\subseteq\{1,\dots,p\}\), if \(A\in\mathbb R^{p\times q}\) is a matrix,
then \(A_T\) denotes the submatrix formed by the rows indexed by \(T\); if
\(a\in\mathbb R^p\) is a vector, then \(a_T\) denotes the corresponding
subvector. For \(a\le b\), possibly with \(a=-\infty\) or \(b=+\infty\), define
\(
\mathrm{sat}_{[a,b]}(z)=\min\{\max\{z,a\},\,b\}.
\)
For vector arguments, this operator is applied componentwise. For scalars \(a_1,\dots,a_n\), \(\operatorname{diag}(a_1,\dots,a_n)\) denotes the diagonal matrix with diagonal entries \(a_1,\dots,a_n\).

%This paper studies feasibility and explicit safety filtering for linear systems subject to multiple safety constraints. We focus on LTI systems because in this setting the resulting barrier constraints have constant normals and state-affine offsets. This structure enables an exact characterization of the feasible state set and, in important cases, closed-form safety filters.
This paper studies feasibility and explicit safety filtering for LTI systems subject to multiple affine safety constraints. Here, the barrier constraints have constant normals and state-affine offsets, enabling exact characterization of feasibility domain and, in some cases, closed-form safety filters.

Consider an LTI system with state $x\in\R^n$ and dynamics:
\begin{equation}
\dot{x}=Ax+Bu,
\label{eq:dyn}
\end{equation}
where \(u\in\mathcal{U}\subseteq\mathbb R^m\) is the control input. The admissible input set 
% satisfy \(u\in\mathcal U\), where 
\(\mathcal U\) represents actuator limits, such as box or polyhedral constraints.
A safety filter modifies a nominal feedback controller \(u_d(x)\) only when needed to enforce safety constraints while remaining as close as possible to \(u_d(x)\). Throughout this paper, safety is specified by affine state constraints:
\begin{equation}
h_i(x)=a_i^\top x-b_i\ge 0,\qquad i=1,\dots,p,
\label{eq:affine_h}
\end{equation}
where \(a_i\in\mathbb R^n\) and \(b_i\in\mathbb R\). These define the safe set:
\begin{equation*}
\mathcal C=\{x\in\mathbb R^n:h_i(x)\ge 0,\ i=1,\dots,p\}.
\end{equation*}
To enforce these constraints, we use high-order control barrier functions (HOCBFs) \cite{WX-CB:22}. For each \(h_i\), let \(r_i\) denote its relative degree, 
the smallest integer \(r_i\ge 1\) such that:
% with respect to \eqref{eq:dyn}, i.e., the smallest integer \(r_i\ge 1\) such that
% \(
% a_i^\top A^{r_i-1}B\neq 0,
% \)
% or equivalently,
\[
a_i^\top A^k B=0,~ k=0,\dots,r_i-2,
\quad
a_i^\top A^{r_i-1}B\neq 0.
\]
Define \(\psi_{i,0}(x)=h_i(x)\) and, for \(k=0,\dots,r_i-2\):
\begin{equation*}
\psi_{i,k+1}(x)=\dot{\psi}_{i,k}(x)+\alpha_{i,k+1}\psi_{i,k}(x),
\end{equation*}
where \(\alpha_{i,k+1}>0\). This yields the admissible set:
\begin{equation}
    \begin{aligned}
        \mathcal S
        =
        \{x\in\mathbb R^n\,:\, & \psi_{i,k}(x)\ge 0,\\ & \forall i=1,\dots,p, \forall k=0,\dots,r_i-1\},
        \label{eq:S_set}
    \end{aligned}
\end{equation}
which satisfies \(\mathcal S\subseteq\mathcal C\). Under standard HOCBF enforcement, \(\mathcal S\) is forward invariant, so trajectories starting in \(\mathcal S\) remain in \(\mathcal C\).
For affine safety constraints and LTI dynamics, the corresponding HOCBF conditions are affine in both the input and the state. The next result records this structure.

\begin{proposition}
\label{prop:lti_constant_affine}
For the LTI system \eqref{eq:dyn} with affine safety constraints
\eqref{eq:affine_h}, define, for each \(i=1,\dots,p\),
\[
\phi_{i,\ell}(s)=\prod_{j=1}^{\ell}(s+\alpha_{i,j}),
\qquad \ell=1,\dots,r_i,
\]
and let \(\phi_i=\phi_{i,r_i}\). Then the HOCBF constraint associated with
\(h_i\) can be written as
\begin{equation}
\ell_i^\top u+\beta_i(x)\ge 0,
\label{eq:stacked}
\end{equation}
where
\[
\ell_i^\top=a_i^\top A^{r_i-1}B,
\qquad
\beta_i(x)=a_i^\top \phi_i(A)x-\phi_i(0)\,b_i.
\]
In particular, \(\ell_i\in\mathbb R^m\) is constant and \(\beta_i(x)\) is affine
in \(x\).
\end{proposition}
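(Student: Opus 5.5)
We prove by induction on \(k\) a closed form for the auxiliary functions \(\psi_{i,k}\), and then read off the HOCBF constraint \(\dot\psi_{i,r_i-1}+\alpha_{i,r_i}\psi_{i,r_i-1}\ge 0\). Fix \(i\). The claim we aim for is that, for \(k=0,\dots,r_i-1\),
\[
\psi_{i,k}(x)=a_i^\top \phi_{i,k}(A)x-\phi_{i,k}(0)\,b_i ,
\]
with the convention \(\phi_{i,0}\equiv 1\).

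The base case \(k=0\) is just \(h_i(x)=a_i^\top x-b_i\). For the inductive step with \(k\le r_i-2\), we differentiate \(\psi_{i,k}\) along \eqref{eq:dyn}. This gives
\[
\dot\psi_{i,k}=a_i^\top\phi_{i,k}(A)(Ax+Bu).
\]
The key point is that \(a_i^\top\phi_{i,k}(A)B=0\). Indeed, \(\phi_{i,k}(A)\) is a polynomial in \(A\) of degree \(k\le r_i-2\), so \(a_i^\top\phi_{i,k}(A)B\) is a linear combination of the terms \(a_i^\top A^jB\) with \(j\le r_i-2\), and each of these vanishes by the definition of relative degree. Hence
\[
\psi_{i,k+1}=a_i^\top\phi_{i,k}(A)(A+\alpha_{i,k+1}I)x-\alpha_{i,k+1}\phi_{i,k}(0)b_i .
\]
Since polynomials in \(A\) commute, \(\phi_{i,k}(A)(A+\alpha_{i,k+1}I)=\phi_{i,k+1}(A)\). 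Also \(\alpha_{i,k+1}\phi_{i,k}(0)=\phi_{i,k+1}(0)\). This closes the induction.

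Finally, we apply the same computation at \(k=r_i-1\). The HOCBF condition reads \(\dot\psi_{i,r_i-1}+\alpha_{i,r_i}\psi_{i,r_i-1}\ge0\). Its state part is \(a_i^\top\phi_i(A)x-\phi_i(0)b_i=\beta_i(x)\), by the same identities as above. Its input part is \(a_i^\top\phi_{i,r_i-1}(A)Bu\).

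The only step needing care is this input coefficient. The polynomial \(\phi_{i,r_i-1}\) is monic of degree \(r_i-1\), so
\[
a_i^\top\phi_{i,r_i-1}(A)B=a_i^\top A^{r_i-1}B+\sum_{j<r_i-1}c_j\,a_i^\top A^jB .
\]
Every lower-order term vanishes by the relative-degree assumption, so the coefficient equals \(\ell_i^\top=a_i^\top A^{r_i-1}B\). This vector is constant, and \(\beta_i\) is affine in \(x\), which gives \eqref{eq:stacked}.
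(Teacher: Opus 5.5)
Your induction is correct and complete. The paper does not carry out this computation; its proof is a one-line appeal to Lemma~2 of \cite{PM-SSM-ADA:26}, applied to each \(h_i(x)=a_i^\top x-b_i\). Your argument is a self-contained version of what that citation hides, and it rests on three facts. First, the relative-degree conditions \(a_i^\top A^jB=0\), \(j\le r_i-2\), remove the input term from \(\dot\psi_{i,k}\) for \(k\le r_i-2\). Second, the factorization \(\phi_{i,k+1}(s)=\phi_{i,k}(s)(s+\alpha_{i,k+1})\) carries the closed form forward, both for the state part and for the constant \(\phi_{i,k}(0)b_i\). Third, monicity of \(\phi_{i,r_i-1}\) leaves \(a_i^\top A^{r_i-1}B\) as the only surviving input coefficient. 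The citation keeps the paper short, but the reader must then check that the external lemma uses the same HOCBF recursion and ordering of the gains \(\alpha_{i,j}\). Your route instead verifies the formula directly under the paper's own conventions. This includes the degenerate case \(r_i=1\): there the induction is empty, and \(\phi_{i,0}\equiv 1\) gives \(\ell_i^\top=a_i^\top B\). As a by-product, your intermediate identity \(\psi_{i,k}(x)=a_i^\top\phi_{i,k}(A)x-\phi_{i,k}(0)b_i\) shows explicitly that the set \(\mathcal S\) in \eqref{eq:S_set} is polyhedral. That is consistent with \(\mathcal S\) appearing as a polygon in the numerical examples.
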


%\begin{proposition}
%\label{prop:lti_constant_affine}
%For the LTI system \eqref{eq:dyn} with affine safety constraints
%\eqref{eq:affine_h}, each HOCBF constraint can be written as:
%\begin{equation}
%\ell_i^\top u+\beta_i(x)\ge 0,
%\label{eq:stacked}
%\end{equation}
%where
%\(
%\ell_i^\top = a_i^\top A^{r_i-1}B,\)  and:
%\[
%\beta_i(x)=
%\Big(a_i^\top A^{r_i}
%+\sum_{k=1}^{r_i-1}\sigma_{i,k}a_i^\top A^k
%+\sigma_{i,0}a_i^\top\Big)x
%-\sigma_{i,0}b_i.
%\]
%In particular, \(\ell_i\in\mathbb R^m\) is constant, and \(\beta_i(x)\) is affine in \(x\).
%\end{proposition}

\begin{proof}
This follows directly from \cite[Lemma 2]{PM-SSM-ADA:26} applied to each affine
safety function \(h_i(x)=a_i^\top x-b_i\).
\end{proof}

Stacking the inequalities \eqref{eq:stacked} for \(i=1,\dots,p\) yields:
\begin{equation}
Mu\le d(x),
\label{eq:affine_constraints}
\end{equation}
where \(M\in\mathbb R^{p\times m}\) and \(d(x)\) is affine in \(x\).
Equivalently, the \(i\)th row of \(M\) is \(-\ell_i^\top\) and the \(i\)th
component of \(d(x)\) is \(\beta_i(x)\). The safety filter is then implemented
via the QP:
\begin{align}
u^\star(x)=\argmin_{u}\quad
&\frac12\big(u-u_d(x)\big)^\top G\big(u-u_d(x)\big)
\label{eq:safety_qp_obj}\\
\text{s.t.}\quad
&Mu\le d(x), \nonumber\\
&u\in\mathcal U,
\label{eq:safety_qp}
\end{align}
where \(G\succ 0\) is a weighting matrix. The filter produces inputs as close as possible to the nominal controller while enforcing the HOCBF constraints and actuator limits.

For a given state \(x\), define the feasible input set:
\begin{equation}
\mathcal F(x)=\{u\in\mathcal U: Mu\le d(x)\}.
\label{eq:set}
\end{equation}
The safety filter \eqref{eq:safety_qp_obj}--\eqref{eq:safety_qp} is well defined
only when \(\mathcal F(x)\neq\emptyset\). This induces the feasibility domain:
\begin{equation*}
\mathcal X_{\mathrm{feas}}
=
\{x\in\mathbb R^n:\mathcal F(x)\neq\emptyset\}.
\end{equation*}
Understanding the structure of \(\mathcal X_{\mathrm{feas}}\) is central to the
design and analysis of safety filters. In particular, feasibility depends on
the geometry of the constraint normals in \eqref{eq:affine_constraints}.
Because these normals are constant for LTI systems with affine safety
functions, the feasibility domain can be characterized exactly and, in
structured cases, \eqref{eq:safety_qp} admits an explicit closed-form expression,
leading to the main problem studied in this paper.

\begin{problem}
\label{problem}
Given an LTI system \eqref{eq:dyn} with affine safety constraints \eqref{eq:affine_h} inducing the safety filter \eqref{eq:safety_qp},
% Given the safety filter \eqref{eq:safety_qp_obj}--\eqref{eq:safety_qp} for an
% LTI system with affine safety functions, 
characterize the feasibility domain
and identify constraint classes that admit tractable feasibility
tests and 
closed-form solutions to \eqref{eq:safety_qp}.
% explicit closed-form safety filters.
\end{problem}

We conclude this section with a motivating example that will be used
throughout the paper.

\begin{example}
\label{ex:double_integrator}
Consider the double integrator:
\begin{equation*}
\dot x_1=x_2,\qquad \dot x_2=u,
\end{equation*}
with:
\begin{equation*}
A=\begin{bmatrix}0&1\\0&0\end{bmatrix},
\qquad
B=\begin{bmatrix}0\\1\end{bmatrix},
\end{equation*}
state \(x=[x_1\ x_2]^\top\in\mathbb R^2\), and scalar input \(u\in\mathbb R\).
Let:
\begin{equation*}
h_i(x)=a_{i,1}x_1+a_{i,2}x_2-b_i,
\qquad i=1,\dots,p.
\end{equation*}
Using HOCBFs with linear class-\(\mathcal K\) functions, each safety constraint takes the form \eqref{eq:stacked}.
If \(a_{i,2}\neq0\), then \(h_i\) has relative degree one, with:
\begin{equation*}
\ell_i=a_{i,2},
\;
\beta_i(x)=a_{i,1}x_2+\alpha_{i,1}(a_{i,1}x_1+a_{i,2}x_2-b_i).
\end{equation*}
If \(a_{i,2}=0\) and \(a_{i,1}\neq0\),  \(h_i\) has relative degree two, with:
\begin{equation*}
\ell_i=a_{i,1},
\;
\beta_i(x)=(\alpha_{i,1}+\alpha_{i,2})a_{i,1}x_2+\alpha_{i,1}\alpha_{i,2}(a_{i,1}x_1-b_i).
\end{equation*}

Thus, even in this simple LTI system, multiple safety requirements lead to
several affine inequalities of the form \eqref{eq:stacked}. The problem is to
characterize the feasibility domain and identify structured cases in which the
safety filter admits an explicit closed-form expression. Unlike
\cite{lavretsky2025servo,cohen2025compatibility}, we do not require the number
of inputs to equal the number of outputs.
\end{example}

\begin{comment}
\begin{example}
\label{ex:double_integrator}
Consider the double integrator
\(
\dot x_1=x_2,\ \dot x_2=u,
\)
with
\(
A=\begin{bmatrix}0&1\\0&0\end{bmatrix},
\)
\(
B=\begin{bmatrix}0\\1\end{bmatrix},
\)
state
\(
x=[x_1\ x_2]^\top\in\mathbb R^2
\)
and scalar input
\(
u\in\mathbb R.
\)
Let the safety functions be affine,
\(
h_i(x)=a_{i,1}x_1+a_{i,2}x_2-b_i,
\)
\(i=1,\dots,p\).
Using ECBFs with linear gains, each safety constraint takes the form
\eqref{eq:stacked}.

If \(a_{i,2}\neq0\), then \(h_i\) has relative degree one, with
\(
\ell_i=a_{i,2}
\)
and
\(
\beta_i(x)=a_{i,1}x_2+\alpha_{i,1}(a_{i,1}x_1+a_{i,2}x_2-b_i).
\)
If \(a_{i,2}=0\) and \(a_{i,1}\neq0\), then \(h_i\) has relative degree two, with
\(
\ell_i=a_{i,1}
\)
and
\(
\beta_i(x)=(\alpha_{i,1}+\alpha_{i,2})a_{i,1}x_2+\alpha_{i,1}\alpha_{i,2}(a_{i,1}x_1-b_i).
\)

This example illustrates Problem~\ref{problem}. Even for a simple LTI system,
multiple safety requirements lead to several affine inequalities of the
form \eqref{eq:stacked}. One must then determine how feasibility can be
characterized and for which structured cases the associated safety
filter admits an explicit closed-form expression. Related constrained-control constructions in \cite{lavretsky2025servo, cohen2025compatibility} rely on the restriction that the number of inputs equals the number of outputs. Here we do not impose that restriction, so feasibility analysis becomes essential.
\end{example}

\end{comment}

\section{Feasibility Domain Characterization}
\label{sec:feasibility}

This section characterizes the feasibility domain of the safety filter
\eqref{eq:safety_qp}. We first present a general feasibility condition based on
Farkas' Lemma, and then exploit the geometry of the constraint normals to
derive simpler and, in several cases, explicit descriptions of the feasibility
domain.

\subsection{General Feasibility Condition}

We begin with a general condition for feasibility of the safety filter
\eqref{eq:safety_qp}. Recall that the filter is feasible at \(x\) if the
feasible input set \(\mathcal F(x)\) in \eqref{eq:set} is nonempty. When the
admissible input set is polyhedral:
\[
\mathcal U=\{u\in\mathbb R^m: Qu\le b\},
\]
this is equivalent to the existence of \(u\) satisfying:
\begin{equation}
Mu\le d(x),\qquad Qu\le b .
\label{eq:stacked_feas}
\end{equation}
The following result characterizes feasibility of \eqref{eq:stacked_feas}.

\begin{proposition}
\label{prop:farkas}
Constraints \eqref{eq:stacked_feas} are feasible if and only if:
\begin{equation*}
\lambda^\top
\begin{bmatrix}
d(x)\\
b
\end{bmatrix}
\ge 0,
\end{equation*}
for every \(\lambda\in\mathbb R_{\ge 0}^{p+q}\) satisfying:
\begin{equation*}
\lambda^\top
\begin{bmatrix}
M\\
Q
\end{bmatrix}
=0.
\end{equation*}
\end{proposition}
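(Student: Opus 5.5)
This is a direct application of a theorem of the alternatives (Farkas' lemma, in its inequality-system form). Fix \(x\) and write the stacked system \eqref{eq:stacked_feas} as
\[
H u\le c,\qquad H=\begin{bmatrix}M\\ Q\end{bmatrix}\in\mathbb R^{(p+q)\times m},\quad c=\begin{bmatrix}d(x)\\ b\end{bmatrix}.
\]
With \(x\) frozen, \(d(x)\) is just a constant vector, so the claim reduces to a statement about a fixed linear system. It reads: \(Hu\le c\) is solvable if and only if \(\lambda^\top c\ge 0\) for every \(\lambda\ge 0\) with \(\lambda^\top H=0\).

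\emph{Necessity} is the easy direction. Suppose \(u\) satisfies \(Hu\le c\) and \(\lambda\ge0\) with \(\lambda^\top H=0\). Multiplying the inequality by the nonnegative \(\lambda^\top\) preserves it, so
\[
0=\lambda^\top H u\le \lambda^\top c .
\]

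\emph{Sufficiency} is the substantive step, and I would argue it by contrapositive. Suppose \(Hu\le c\) has no solution. The goal is a \(\lambda\ge0\) with \(\lambda^\top H=0\) and \(\lambda^\top c<0\). This is exactly the inequality version of Farkas' lemma, also called Gale's theorem of the alternative. I would either cite it directly or derive it from the standard equality form of Farkas' lemma.

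The derivation goes as follows. Infeasibility of \(Hu\le c\) is equivalent to infeasibility of
\[
Hu^+-Hu^-+s=c,\qquad u^+,u^-,s\ge0 .
\]
The standard Farkas lemma applied to the matrix \([H\ \,-H\ \,I]\) then gives a \(y\) with \(y^\top H\ge0\), \(-y^\top H\ge0\), \(y\ge0\) and \(y^\top c<0\). The first two conditions force \(y^\top H=0\), so \(\lambda=y\) is the required certificate.

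The only delicate point is the step that rests on a separating-hyperplane argument: the cone \(\{Hu+s: s\ge 0\}\) must be closed. This holds because it is finitely generated, and it is the reason the characterization is exact rather than asymptotic. Since the paper may assume standard Farkas, I would cite it rather than reprove closedness. Finally, since the equivalence holds at each fixed \(x\), it characterizes membership of \(x\) in \(\mathcal X_{\mathrm{feas}}\).
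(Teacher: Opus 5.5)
Your proposal is correct and follows the same approach as the paper, which at each fixed \(x\) simply invokes Farkas' lemma (the theorem of alternatives for nonstrict linear inequalities, Boyd Sec.~5.8). Your direct necessity argument and your reduction of sufficiency to the equality form via \([H\ \,-H\ \,I]\) just spell out what that citation contains.
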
\smallskip

\begin{proof}
This follows from Farkas' lemma \cite[Sec.~5.8]{Boyd}.
\end{proof}

%Since this condition reduces to finitely many linear inequalities in $x$, the feasible state set \[ \mathcal X_{\mathrm{feas}} = \{x\in\mathbb R^n : \mathcal F(x)\neq\emptyset\} \] admits an exact polyhedral description.

Since \(d(x)\) is affine in \(x\), the set:
\[
\{(x,u)\in\mathbb R^n\times\mathbb R^m: Mu\le d(x),\ Qu\le b\}
\]
is a polyhedron in \((x,u)\). Thus, its projection onto the \(x\)-space \(\mathcal X_{\mathrm{feas}}\) is also polyhedral
\cite[Sec.~2.2.4]{Boyd}.
While general, this characterization does not exploit the geometry of the
constraint normals. In the next subsection, we show that additional
structure yields simpler feasibility conditions.

\subsection{Structural Characterization}

The feasibility condition in Prop.~\ref{prop:farkas} is general. For the LTI
systems considered here, however, the HOCBF constraints take the form
\eqref{eq:stacked}, with constant normals and state-affine offsets (see
Sec.~\ref{sec:problem}). This additional structure allows the feasibility
domain to be characterized more explicitly.

We now study structural cases in which the feasibility conditions admit
simpler and more transparent descriptions. We begin with the situation
where several constraints share the same normal direction.
For a subset \(T\subseteq\{1,\dots,p\}\), suppose the rows of \(M\) indexed by \(T\)
are parallel, i.e.:
\begin{equation}
\ell_i = c_i v, \qquad i\in T,
\label{eq:parallelcon}\end{equation}
for some nonzero vector \(v\in\mathbb R^m\) and scalars \(c_i\neq 0\).
Define:
\begin{equation}
\underline s_T(x)=\max_{i:c_i>0}\left(-\frac{\beta_i(x)}{c_i}\right),\qquad
\overline s_T(x)=\min_{i:c_i<0}\left(-\frac{\beta_i(x)}{c_i}\right),
\label{eq:parallel_interval_definition}
\end{equation}
with the conventions \(\max\emptyset=-\infty\) and \(\min\emptyset=+\infty\).
When \(\mathcal U\subseteq\mathbb R^m\) is closed and convex, also define:
\begin{equation}
s_{\min}:=\inf_{u\in\mathcal U} v^\top u,
\qquad
s_{\max}:=\sup_{u\in\mathcal U} v^\top u.
\label{eq:smin}
\end{equation}

\begin{theorem}
\label{thm:parallel_constraints}
Consider the constraints in \eqref{eq:stacked} indexed by
\(T\subseteq\{1,\dots,p\}\), and suppose their normals satisfy
\eqref{eq:parallelcon}. Let \(\underline s_T(x)\) and \(\overline s_T(x)\) be
defined by \eqref{eq:parallel_interval_definition}. Then, 
% these constraints are
\eqref{eq:stacked} is
equivalent to:
\begin{equation}
\underline s_T(x)\le v^\top u\le \overline s_T(x).
\label{eq:parallel_interval}
\end{equation}
Consequently:
\begin{enumerate}
    \item[(i)] if \(\mathcal U=\mathbb R^m\), then they are feasible at \(x\) if and only if:
    \begin{equation}
    \underline s_T(x)\le \overline s_T(x);
    \label{eq:parallel_unbounded_feas}
    \end{equation}
    \item[(ii)] if \(\mathcal U\) is closed and convex, and \(s_{\min},s_{\max}\) are
    defined by \eqref{eq:smin}, then they are feasible at \(x\) if and only if:
    \begin{equation}
    [\underline s_T(x),\overline s_T(x)]\cap [s_{\min},s_{\max}] \neq \emptyset.
    \label{eq:parallel_bounded}
    \end{equation}
\end{enumerate}

% (i) if \(\mathcal U=\mathbb R^m\), then they are feasible at \(x\) if and only if
% \begin{equation}
% \underline s_T(x)\le \overline s_T(x);
% \label{eq:parallel_unbounded_feas}
% \end{equation}

% (ii) if \(\mathcal U\) is closed and convex, and \(s_{\min},s_{\max}\) are
% defined by \eqref{eq:smin}, then they are feasible at \(x\) if and only if
% \begin{equation}
% [\underline s_T(x),\overline s_T(x)]\cap [s_{\min},s_{\max}] \neq \emptyset.
% \label{eq:parallel_bounded}
% \end{equation}
\end{theorem}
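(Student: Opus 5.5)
The plan is to reduce everything to the scalar quantity \(s=v^\top u\). By \eqref{eq:parallelcon}, each constraint in \eqref{eq:stacked} with \(i\in T\) reads \(c_i\,v^\top u+\beta_i(x)\ge 0\). We split \(T\) according to the sign of \(c_i\), which is nonzero by assumption.

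If \(c_i>0\), dividing by \(c_i\) preserves the inequality and gives \(v^\top u\ge -\beta_i(x)/c_i\). If \(c_i<0\), dividing reverses it and gives \(v^\top u\le -\beta_i(x)/c_i\). Taking the conjunction over all \(i\in T\), the lower bounds collapse to their maximum \(\underline s_T(x)\) and the upper bounds to their minimum \(\overline s_T(x)\), as defined in \eqref{eq:parallel_interval_definition}. The conventions \(\max\emptyset=-\infty\) and \(\min\emptyset=+\infty\) cover the case where one sign class is empty. Each step is an equivalence, so this proves \eqref{eq:parallel_interval}.

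For (i), feasibility with \(\mathcal U=\mathbb R^m\) means some \(u\) satisfies \eqref{eq:parallel_interval}.
\begin{itemize}
\item Necessity: such a \(u\) forces \(\underline s_T(x)\le\overline s_T(x)\).
\item Sufficiency: since \(v\neq0\), the linear map \(u\mapsto v^\top u\) is onto \(\mathbb R\). Pick any finite \(s\) in \([\underline s_T(x),\overline s_T(x)]\); this interval is nonempty and, because each bound is either finite or the appropriate infinity, it contains a real number. Then \(u=s\,v/\|v\|^2\) works.
\end{itemize}

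For (ii), the \(x\)-dependent constraints are satisfiable by some \(u\in\mathcal U\) if and only if \([\underline s_T(x),\overline s_T(x)]\) meets the image \(v^\top\mathcal U=\{v^\top u:u\in\mathcal U\}\). Since \(\mathcal U\) is convex, nonempty (implicitly), and \(v^\top\) is linear, this image is an interval with endpoints \(s_{\min}\) and \(s_{\max}\).

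The main obstacle is attainment of these endpoints. The image need not be closed even though \(\mathcal U\) is closed; for example, a closed convex set can have a linear image that is an open half-line. So \(v^\top\mathcal U\) could equal \((s_{\min},s_{\max}]\), and an intersection touching only the endpoint \(s_{\min}\) would then not be realizable. Writing \(I=[\underline s_T(x),\overline s_T(x)]\), we handle this as follows.
\begin{itemize}
\item If \(I\) meets the relative interior of \([s_{\min},s_{\max}]\), or if \(s_{\min}=s_{\max}\) is attained, a realizable point exists.
\item The remaining case is \(I\cap[s_{\min},s_{\max}]=\{s_{\min}\}\) (or \(\{s_{\max}\}\)) with that endpoint not attained.
\end{itemize}
For this last case, the plan is to invoke attainment when \(\mathcal U\) is compact, or when it is polyhedral as in the preceding subsection, since linear programs over nonempty polyhedra attain finite optima. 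Otherwise, we would state that the interval \([s_{\min},s_{\max}]\) is to be read as the closure of the image, with endpoints included only when attained. Under compactness or polyhedrality, the image is exactly \([s_{\min},s_{\max}]\cap\mathbb R\), and the equivalence \eqref{eq:parallel_bounded} follows directly.
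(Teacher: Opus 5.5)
Your reduction to the scalar \(v^\top u\) (sign split on \(c_i\), then max/min of the bounds with the empty-set conventions) and your proof of (i) via \(u=s\,v/\|v\|^2\) are the same as the paper's. For (ii), the paper also works through the image \(v^\top\mathcal U\), but it simply asserts that this image is \([s_{\min},s_{\max}]\) for closed convex \(\mathcal U\).

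Your objection to that step is correct. It is not a technicality: it makes (ii) false as literally stated. Take \(\dot x=u\) in \(\mathbb R^2\) and \(h_1(x)=-x_1\). Then \(\ell_1=-e_1\) (so \(v=e_1\), \(c_1=-1\)) and \(\beta_1(x)=-\alpha_{1,1}x_1\). Let \(\mathcal U=\{u: u_1>0,\ u_1u_2\ge 1\}\), which is closed and convex, with \(s_{\min}=0\) unattained and \(s_{\max}=+\infty\). At \(x=0\),
\((-\infty,0]\cap[0,\infty)=\{0\}\neq\emptyset\), so the condition in (ii) holds. Yet no \(u\in\mathcal U\) satisfies \(u_1\le 0\), so the constraints are infeasible.

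Your case analysis isolates exactly this defect. Points of the open interval \((s_{\min},s_{\max})\) are always realized; only a one-point contact at an unattained endpoint can fail. Adding the hypothesis that \(\mathcal U\) is compact or polyhedral repairs it. That hypothesis still covers the box and polyhedral input sets the paper actually uses.

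Two small corrections:
\begin{enumerate}
\item Your fallback should read \([s_{\min},s_{\max}]\) as the image \(v^\top\mathcal U\) itself, not its closure. The closure always contains the finite endpoints, which is the source of the problem.
\item Nonemptiness of \(\mathcal U\) need not be assumed. If \(\mathcal U=\emptyset\), then \(s_{\min}=+\infty\) and \(s_{\max}=-\infty\), and both sides of the equivalence are false.
\end{enumerate}
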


% \smallskip

\begin{proof}
For each \(i\in T\), the constraint \(\ell_i^\top u+\beta_i(x)\ge 0\) in
\eqref{eq:stacked} becomes
\(
c_i v^\top u+\beta_i(x)\ge 0,
\)
since the normals satisfy \eqref{eq:parallelcon}. If \(c_i>0\), this is equivalent to
\(
v^\top u\ge -\frac{\beta_i(x)}{c_i},
\)
whereas if \(c_i<0\), it is equivalent to
\(
v^\top u\le -\frac{\beta_i(x)}{c_i}.
\)
Taking the largest lower bound over all indices with \(c_i>0\) and the
smallest upper bound over all indices with \(c_i<0\) yields
\eqref{eq:parallel_interval}. If one of these index sets is empty, the
corresponding endpoint is infinite, so the merged constraint is one-sided.

For part (i), if \(\mathcal U=\mathbb R^m\), then \(v^\top u\) can attain any real
value since \(v\neq 0\); indeed, for \(u=\lambda v\) one has
\(v^\top u=\lambda \|v\|^2\). Hence there exists \(u\in\mathbb R^m\) satisfying
\eqref{eq:parallel_interval} if and only if the interval is nonempty, i.e.,
\(
\underline s_T(x)\le \overline s_T(x).
\)

For part (ii), if \(\mathcal U\) is closed and convex, then the image
\(
\{v^\top u:\ u\in\mathcal U\}
\)
is an interval, namely \([s_{\min},s_{\max}]\), possibly with infinite endpoints.
Therefore, there exists \(u\in\mathcal U\) satisfying \eqref{eq:parallel_interval}
if and only if
\eqref{eq:parallel_bounded} holds.
\end{proof}

\begin{remark}
\label{rem:parallel_one_sided}
If all \(c_i\) have the same sign, then \eqref{eq:parallel_interval} is one-sided.
Thus feasibility is automatic for \(\mathcal U=\mathbb R^m\), and for closed convex
\(\mathcal U\) it reduces to \(s_{\min}\le \overline s_T(x)\) or
\(\underline s_T(x)\le s_{\max}\), depending on which endpoint is infinite.
\end{remark}

Ex.~\ref{ex:double_integrator} is an instance of
Theorem~\ref{thm:parallel_constraints}. Since the input is scalar, each
constraint normal \(\ell_i\) is a scalar, so all constraints act along
the same direction.
The feasibility condition in Theorem~\ref{thm:parallel_constraints}
depends on \(s_{\min}\) and \(s_{\max}\). When
\(\mathcal U=\{u:Qu\le b\}\), these are the optimal values of two linear
programs. For box-constrained inputs, they admit closed-form formulas.

\begin{proposition}
\label{prop:projection_formulas}
Let \(v\neq 0\), and let \(s_{\min}\) and \(s_{\max}\) be defined
as in \eqref{eq:smin}. If
\(
\mathcal U=\{u: u_k^{\min}\le u_k\le u_k^{\max}\},
\)
then:
\begin{equation*}
\begin{aligned}
s_{\min}
&=\sum_{k=1}^m v_k
\Big(
u_k^{\min}\mathbf 1_{\{v_k\ge0\}}
+
u_k^{\max}\mathbf 1_{\{v_k<0\}}
\Big),\\
s_{\max}
&=\sum_{k=1}^m v_k
\Big(
u_k^{\max}\mathbf 1_{\{v_k\ge0\}}
+
u_k^{\min}\mathbf 1_{\{v_k<0\}}
\Big).
\end{aligned}
\end{equation*}
\end{proposition}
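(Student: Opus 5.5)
The plan is to exploit the fact that a box is a Cartesian product of intervals, so the linear functional \(v^\top u=\sum_{k=1}^m v_k u_k\) separates across coordinates. Since \(\mathcal U=\prod_{k=1}^m[u_k^{\min},u_k^{\max}]\) and the coordinates can be chosen independently, the infimum of the sum over the product set equals the sum of the coordinatewise infima:
\[
s_{\min}=\inf_{u\in\mathcal U}\sum_{k=1}^m v_k u_k=\sum_{k=1}^m \inf_{u_k\in[u_k^{\min},u_k^{\max}]} v_k u_k .
\]
The same holds with \(\sup\) in place of \(\inf\) for \(s_{\max}\). I will justify this interchange directly. For any \(u\in\mathcal U\), each term satisfies \(v_k u_k\ge \inf_{u_k} v_k u_k\), which gives ``\(\ge\)''. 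For ``\(\le\)'', I choose each \(u_k\) attaining (or approaching) its coordinatewise infimum; this is possible precisely because the constraint set is a product.

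Next I solve each scalar problem \(\min_{t\in[u_k^{\min},u_k^{\max}]} v_k t\). The function \(t\mapsto v_k t\) is monotone. If \(v_k>0\), the minimum is at \(t=u_k^{\min}\) and the maximum is at \(t=u_k^{\max}\). If \(v_k<0\), the roles reverse. If \(v_k=0\), the term is identically zero, so either endpoint gives the value \(0\). This is consistent with the indicator \(\mathbf 1_{\{v_k\ge 0\}}\) assigning \(u_k^{\min}\) (respectively \(u_k^{\max}\)), since it is multiplied by \(v_k=0\). Summing these coordinatewise values reproduces the two displayed formulas.

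The only delicate point is infinite bounds, which the paper's conventions allow (e.g., \(u_k^{\min}=-\infty\)). Here I will adopt the convention \(0\cdot(\pm\infty)=0\) for coordinates with \(v_k=0\). I will also note that whenever \(v_k\neq 0\) multiplies an infinite endpoint, that term is \(-\infty\) in \(s_{\min}\) (or \(+\infty\) in \(s_{\max}\)), matching the infimum and supremum of the unbounded scalar problem. Since all the relevant infinite terms in a given formula share the same sign, the sum is well defined. I expect this bookkeeping to be the main obstacle, though it is minor. If the bounds are finite, the sets are compact and the infimum and supremum are attained, so the argument is immediate.
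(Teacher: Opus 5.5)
Your proposal is correct and follows essentially the paper's route: the functional \(v^\top u\) separates across the box coordinates, and each monotone scalar term \(v_k u_k\) is minimized or maximized at the endpoint fixed by the sign of \(v_k\). Your explicit justification of the inf/sum interchange, the \(v_k=0\) case, and infinite bounds fills in details the paper's short proof leaves implicit, and it tacitly assumes a nonempty box (\(u_k^{\min}\le u_k^{\max}\)), as the paper does.
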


\begin{proof}
Since \(v^\top u=\sum_{k=1}^m v_k u_k\), optimization over the box separates across coordinates. For each \(k\), the term \(v_k u_k\) is minimized at \(u_k^{\min}\) if \(v_k\ge 0\) and at \(u_k^{\max}\) if \(v_k<0\). The formula for \(s_{\max}\) follows similarly.
\end{proof}

Theorem~\ref{thm:parallel_constraints} immediately yields exact descriptions of
the feasibility domain for parallel constraints. Since \(\beta_i(x)\) is affine
and \(c_i\) is constant, each function:
\begin{equation}
\nu_i(x)\coloneqq-\frac{\beta_i(x)}{c_i}
\label{eq:v}
\end{equation}
is affine in \(x\). Hence \(\underline s_T(x)=\max_{i:c_i>0}\nu_i(x)\) and
\(\overline s_T(x)=\min_{i:c_i<0}\nu_i(x)\) are piecewise affine, and the
corresponding feasibility domains admit explicit descriptions.

\begin{proposition}
\label{prop:parallel_feasible_set}
Let the conditions of Theorem~\ref{thm:parallel_constraints} hold. Then:
\begin{enumerate}
\item[(i)] If \(\mathcal U=\mathbb R^m\), then the feasibility domain associated with
the constraints indexed by \(T\) is:
\begin{equation}
    \begin{aligned}
        \mathcal X_T^{\mathrm{u}}
        = \{x: & \underline s_T(x)\le \overline s_T(x)\} \\
        = \{x: & \nu_i(x)\le \nu_j(x),\\ & \forall i \text{ with } c_i>0,\ \forall j \text{ with } c_j<0\}.
    \end{aligned}
\label{eq:Xu}
\end{equation}

\item[(ii)] If \(\mathcal U\) is closed and convex, then the feasibility domain
associated with the constraints indexed by \(T\) is:
\begin{equation}
\mathcal X_T^{\mathrm{b}}
=
\{x:[\underline s_T(x),\overline s_T(x)]\cap[s_{\min},s_{\max}]\neq\emptyset\},
\label{eq:Xb}
\end{equation}
or, equivalently:
\begin{equation*}
\mathcal X_T^{\mathrm{b}}
=
\mathcal X_T^{\mathrm{u}}
\cap
\{x:\underline s_T(x)\le s_{\max}\}
\cap
\{x:s_{\min}\le \overline s_T(x)\}.
\end{equation*}
Further, if \(\mathcal U\) is polyhedral, then \(\mathcal X_T^{\mathrm{b}}\) is polyhedral.
\end{enumerate}
\end{proposition}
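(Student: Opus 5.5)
The plan is to read both parts directly off Theorem~\ref{thm:parallel_constraints}, using the affine functions \(\nu_i\) from \eqref{eq:v}. By definition, the feasibility domain for the constraints indexed by \(T\) is the set of \(x\) at which these constraints, together with \(u\in\mathcal U\), admit a solution. Theorem~\ref{thm:parallel_constraints} already gives the pointwise feasibility criterion in each case. So the first equality in \eqref{eq:Xu} is part (i) of the theorem, and \eqref{eq:Xb} is part (ii), each read as a description of the set of states.

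For the second description in (i), write \(\underline s_T(x)=\max_{i:c_i>0}\nu_i(x)\) and \(\overline s_T(x)=\min_{j:c_j<0}\nu_j(x)\). A maximum lies below a minimum exactly when every element of the first family lies below every element of the second. This gives \(\nu_i(x)\le\nu_j(x)\) for all pairs with \(c_i>0\) and \(c_j<0\). The empty-index conventions \(\max\emptyset=-\infty\) and \(\min\emptyset=+\infty\) make the condition vacuous, which matches a vacuous family of pairwise inequalities, so \(\mathcal X_T^{\mathrm u}=\mathbb R^n\) in that case.

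For (ii), two closed (possibly unbounded) intervals \([a,b]\) and \([c,d]\) intersect if and only if \(a\le b\), \(c\le d\), \(a\le d\) and \(c\le b\). Since \(\mathcal U\) is assumed nonempty, \(s_{\min}\le s_{\max}\) holds automatically. The remaining three conditions are \(x\in\mathcal X_T^{\mathrm u}\), \(\underline s_T(x)\le s_{\max}\) and \(s_{\min}\le\overline s_T(x)\), which gives the claimed intersection.

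The one point needing care is the meaning of "interval" when endpoints are infinite or not attained. In the proof of Theorem~\ref{thm:parallel_constraints}, the image \(\{v^\top u: u\in\mathcal U\}\) is identified with \([s_{\min},s_{\max}]\), with infinite endpoints interpreted as open. I will use the same convention here so that the intersection criterion is consistent with the theorem.

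For polyhedrality when \(\mathcal U\) is polyhedral, each of the three sets is a finite intersection of affine half-spaces:
\begin{itemize}
\item \(\{x:\underline s_T(x)\le s_{\max}\}=\bigcap_{i:c_i>0}\{x:\nu_i(x)\le s_{\max}\}\);
\item \(\{x:s_{\min}\le\overline s_T(x)\}=\bigcap_{j:c_j<0}\{x:s_{\min}\le\nu_j(x)\}\);
\item \(\mathcal X_T^{\mathrm u}\) is a finite intersection of the half-spaces \(\{\nu_i\le\nu_j\}\).
\end{itemize}
Here \(s_{\min}\) and \(s_{\max}\) are constants, namely the LP values over the polyhedron. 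If one of them is infinite, the corresponding family of constraints is vacuous and is dropped. A finite intersection of half-spaces is a polyhedron, which finishes the argument. Alternatively, polyhedrality follows from the projection argument after Prop.~\ref{prop:farkas}, but the direct description is more explicit.

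I expect no substantial obstacle. The only delicate point is this bookkeeping of infinite endpoints, both in the interval-intersection equivalence and in discarding vacuous constraints.
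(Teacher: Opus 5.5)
Your proposal is correct and matches the paper's proof essentially step for step. Both parts are read off Theorem~\ref{thm:parallel_constraints}; the max--min decomposition yields the pairwise description of \(\mathcal X_T^{\mathrm u}\); and the endpoint comparison for intersecting intervals yields the intersection representation of \(\mathcal X_T^{\mathrm b}\). Your rewriting of each set as a finite intersection of half-spaces \(\{\nu_i\le s_{\max}\}\), \(\{s_{\min}\le\nu_j\}\), \(\{\nu_i\le\nu_j\}\) makes the paper's brief remark that "piecewise affine" inequalities give finitely many half-spaces precise.
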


\begin{proof}
Part (i) follows from \eqref{eq:parallel_unbounded_feas}. In particular, since:
\[
\underline s_T(x)=\max_{i:c_i>0}\nu_i(x),
\qquad
\overline s_T(x)=\min_{j:c_j<0}\nu_j(x),
\]
the inequality \(\underline s_T(x)\le \overline s_T(x)\) is equivalent to
\(\nu_i(x)\le \nu_j(x)\) for all \(i\) with \(c_i>0\) and \(j\) with \(c_j<0\),
which gives \eqref{eq:Xu}.

Part (ii) follows from \eqref{eq:parallel_bounded}, which gives \eqref{eq:Xb}.
Since two intervals intersect if and only if each lower endpoint does not exceed
the opposite upper endpoint, the condition
\(
[\underline s_T(x),\overline s_T(x)]\cap[s_{\min},s_{\max}]\neq\emptyset
\)
is equivalent to:
\[
\underline s_T(x)\le s_{\max},
\qquad
s_{\min}\le \overline s_T(x),
\]
together with \(\underline s_T(x)\le \overline s_T(x)\), yielding the
intersection representation of \(\mathcal X_T^{\mathrm b}\).

Since each \(\nu_i(x)\) is affine, \(\underline s_T(x)\), \(\overline s_T(x)\) are piecewise affine. Thus, if \(\mathcal U\) is
polyhedral, the inequalities defining \(\mathcal X_T^{\mathrm b}\) describe
finitely many halfspaces, so \(\mathcal X_T^{\mathrm b}\) is polyhedral.
\end{proof}

\begin{comment}
\begin{proof}
Part (i) follows directly from
\eqref{eq:parallel_unbounded_feas}. Since
\(\underline s_T(x)=\max_{i:c_i>0}\nu_i(x)\) and
\(\overline s_T(x)=\min_{j:c_j<0}\nu_j(x)\), the condition
\(\underline s_T(x)\le \overline s_T(x)\) holds if and only if
\(\nu_i(x)\le\nu_j(x)\) for every \(i\) with \(c_i>0\) and every
\(j\) with \(c_j<0\), which yields \eqref{eq:Xu}.

For part (ii), Theorem~\ref{thm:parallel_constraints} states that
feasibility with bounded inputs requires
\(
[\underline s_T(x),\overline s_T(x)]\cap[s_{\min},s_{\max}]\neq\emptyset,
\)
which gives \eqref{eq:Xb}. Since two intervals intersect if and only if
their lower bound does not exceed the opposite upper bound,
this condition is equivalent to
\(
\underline s_T(x)\le s_{\max}
\)
and
\(
s_{\min}\le \overline s_T(x).
\)
Combining these inequalities with
\(\underline s_T(x)\le\overline s_T(x)\) (the feasibility condition in
the unbounded case) yields the intersection representation above.

Finally, because each \(\nu_i(x)\) is affine in \(x\), the functions
\(\underline s_T(x)\) and \(\overline s_T(x)\) are piecewise affine.
Therefore the inequalities defining \(\mathcal X_T^{\mathrm{b}}\)
describe finitely many halfspaces whenever \(\mathcal U\) is polyhedral,
so the resulting feasible set is also polyhedral.
\end{proof}
\end{comment}

We next consider the opposite situation, where the constraint normals are linearly independent.

\begin{proposition}
\label{prop:independent_constraints}
Let \(T\subseteq\{1,\dots,p\}\), and suppose the rows of \(M\) indexed by \(T\)
are linearly independent. If \(\mathcal U=\mathbb R^m\), then the constraints in
\eqref{eq:stacked} indexed by \(T\) are feasible for every \(x\). Equivalently,
the corresponding feasibility domain is \(\mathbb R^n\).
\end{proposition}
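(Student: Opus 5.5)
The plan is to solve the constraints indexed by \(T\) with equality, using the full row rank of \(M_T\). Write \(M_T\in\mathbb R^{|T|\times m}\) for the submatrix of \(M\) whose rows are \(-\ell_i^\top\), \(i\in T\). Linear independence of these rows means \(M_T\) has full row rank \(|T|\), which forces \(|T|\le m\). Consequently \(M_TM_T^\top\in\mathbb R^{|T|\times|T|}\) is invertible. This is the only structural fact the argument needs.

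Next, for an arbitrary \(x\in\mathbb R^n\), I would construct an explicit input. Take
\[
u(x)=M_T^\top\big(M_TM_T^\top\big)^{-1} d_T(x),
\]
where \(d_T(x)\) is the subvector of \(d(x)\) with entries \(\beta_i(x)\), \(i\in T\). Then \(M_Tu(x)=d_T(x)\), so \(M_Tu(x)\le d_T(x)\) holds with equality. In the notation of \eqref{eq:stacked}, this says \(\ell_i^\top u(x)+\beta_i(x)=0\ge 0\) for each \(i\in T\). Since \(\mathcal U=\mathbb R^m\), no further restriction applies, so \(u(x)\in\mathcal F_T(x)\) and the feasible input set is nonempty for every \(x\). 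It follows that the feasibility domain is all of \(\mathbb R^n\). As a by-product, the construction shows that a feasible input can be chosen affine in \(x\).

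As a cross-check, I would also note that Prop.~\ref{prop:farkas} gives the same conclusion directly. With \(q=0\), the only \(\lambda\ge 0\) satisfying \(\lambda^\top M_T=0\) is \(\lambda=0\), by linear independence. The Farkas condition therefore holds trivially.

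There is no real obstacle here. The only point needing care is to justify the invertibility of \(M_TM_T^\top\), equivalently the surjectivity of \(u\mapsto M_Tu\), from the linear independence of the rows. I would state this explicitly, since it is exactly where the hypothesis enters.
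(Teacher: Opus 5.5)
Your proof is correct, but your main argument differs from the paper's. The paper's entire proof is what you present as a cross-check: with no input constraints, Prop.~\ref{prop:farkas} requires \(\lambda^\top\beta_T(x)\ge 0\) for all \(\lambda\ge 0\) with \(\lambda^\top M_T=0\), and linear independence forces \(\lambda=0\), so the condition holds vacuously.

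Your primary route is constructive. Full row rank makes \(M_TM_T^\top\) positive definite, since \(y^\top M_TM_T^\top y=\|M_T^\top y\|^2>0\) for \(y\neq 0\). The right inverse then gives \(u(x)=M_T^\top(M_TM_T^\top)^{-1}d_T(x)\), which satisfies every constraint with equality.

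Your approach has three advantages. It avoids any theorem of alternatives. It exhibits an explicit witness that is affine in \(x\). And replacing \(d_T(x)\) by \(d_T(x)-\mathbf 1\) immediately yields strict feasibility at every \(x\), which the dual argument does not show directly.

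The paper's Farkas argument has its own advantages. It is a one-liner once Prop.~\ref{prop:farkas} is available. It also matches the dual-certificate template the paper reuses in Theorem~\ref{thm:block_constraints}, where the multiplier is split across blocks and each block's certificate is invoked separately.
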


\begin{proof}
By Prop.~\ref{prop:farkas}, feasibility holds if and only if
\(\lambda^\top\beta_T(x)\ge 0\) for all \(\lambda\ge 0\) satisfying
\(\lambda^\top M_T=0\).
Since the rows of \(M_T\) are linearly independent, \(\lambda^\top M_T=0\) implies \(\lambda=0\), and the condition is automatically satisfied.
\end{proof}

For bounded input sets, linear independence alone does not guarantee
feasibility, since the required input may lie outside \(\mathcal U\). In that
case, feasibility must be checked jointly with the input constraints, i.e., by
testing whether there exists \(u\in\mathcal U\) such that
\(
M_Tu+\beta_T(x)\ge 0.
\)
If \(\mathcal U\) is polyhedral, this is a linear feasibility problem.
More generally, the constraints may split into blocks acting along independent
directions, in which case the feasibility domain decomposes blockwise.
For \(T\subseteq\{1,\dots,p\}\), let \(\mathcal X_T\) denote the feasibility
domain associated with the constraints in \eqref{eq:stacked} indexed by \(T\).

%We say that the row spaces of the matrices \(M_{T_1},\dots,M_{T_\eta}\) are independent if
%\[
%w_1+\cdots+w_\eta=0,\quad w_k\in \operatorname{row}(M_{T_k})
%\]
%implies \(w_1=\cdots=w_\eta=0\).

\begin{theorem}
\label{thm:block_constraints}
Consider the stacked constraints \eqref{eq:stacked}, with coefficient matrix
\(M\in\mathbb R^{p\times m}\), and suppose \(\{1,\dots,p\}\) is partitioned as
\(T_1,\dots,T_\eta\) such that the row spaces of
\(M_{T_1},\dots,M_{T_\eta}\) are mutually independent. If \(\mathcal U=\mathbb R^m\),
then the constraint set is feasible at \(x\) if and only if each block
indexed by \(T_k\) is feasible at \(x\), for \(k=1,\dots,\eta\). Equivalently:
\begin{equation*}
\mathcal X_{\mathrm{feas}}=\bigcap_{k=1}^\eta \mathcal X_{T_k}.
\end{equation*}
\end{theorem}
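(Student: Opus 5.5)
We will prove the two inclusions separately, using the Farkas characterization of Prop.~\ref{prop:farkas} with \(\mathcal U=\mathbb R^m\), so that the input block \(Q\) is absent. The inclusion \(\mathcal X_{\mathrm{feas}}\subseteq\bigcap_k \mathcal X_{T_k}\) is immediate. If some \(u\) satisfies all constraints \(M u\le d(x)\), then the same \(u\) satisfies every subsystem \(M_{T_k}u\le d_{T_k}(x)\). Hence each block is feasible at \(x\).

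For the reverse inclusion, fix \(x\) with every block feasible. By Prop.~\ref{prop:farkas} applied to the whole system, it suffices to show that \(\lambda^\top d(x)\ge 0\) for every \(\lambda\in\mathbb R^p_{\ge0}\) with \(\lambda^\top M=0\).

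\begin{enumerate}
\item Split \(\lambda\) according to the partition into subvectors \(\lambda_{T_1},\dots,\lambda_{T_\eta}\). Then
\[
0=\lambda^\top M=\sum_{k=1}^\eta w_k,\qquad w_k:=\lambda_{T_k}^\top M_{T_k}\in\operatorname{row}(M_{T_k}).
\]
\item Mutual independence of the row spaces means that \(\sum_k w_k=0\) with \(w_k\in\operatorname{row}(M_{T_k})\) forces every \(w_k=0\). So each \(\lambda_{T_k}\ge0\) satisfies \(\lambda_{T_k}^\top M_{T_k}=0\).
\item Block \(k\) is feasible at \(x\), so the ``only if'' direction of Prop.~\ref{prop:farkas} applied to block \(k\) gives \(\lambda_{T_k}^\top d_{T_k}(x)\ge 0\).
\item Summing over \(k\) yields \(\lambda^\top d(x)=\sum_k\lambda_{T_k}^\top d_{T_k}(x)\ge0\), and the ``if'' direction of Prop.~\ref{prop:farkas} gives feasibility of the full system.
\end{enumerate}

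An alternative constructive route avoids Farkas. Choose a feasible \(u_k\) for each block. Then look for a single \(u\) with \(M_{T_k}u=M_{T_k}u_k\) for all \(k\). Independence of the row spaces means the stacked map \(u\mapsto (M_{T_k}u)_k\) has range \(\bigoplus_k\operatorname{range}(M_{T_k})\), which makes this system solvable. I would present the Farkas argument as the main proof, since it uses only earlier results.

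The only delicate point is interpreting ``mutually independent.'' It must mean that the sum of the row spaces is direct, not merely that they are pairwise trivially intersecting. Step 2 is exactly where this property is used.
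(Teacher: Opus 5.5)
Your proposal is correct and follows essentially the same argument as the paper: necessity is immediate, and sufficiency splits a Farkas multiplier $\lambda$ by blocks, uses independence of the row spaces to get $\lambda_{T_k}^\top M_{T_k}=0$ for each $k$, applies Prop.~\ref{prop:farkas} blockwise, and sums. Your constructive alternative via the range of the stacked map is also valid, and your reading of ``mutually independent'' as a direct sum matches what the proof needs.
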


\begin{proof}
Necessity is immediate. For sufficiency, let \(\lambda\ge 0\) satisfy
\(\lambda^\top M=0\), and partition
\(\lambda=[\lambda_1^\top\ \cdots\ \lambda_\eta^\top]^\top\).
Then
\(
0=\lambda^\top M=\sum_{k=1}^\eta \lambda_k^\top M_{T_k}.
\)
Each vector \(\lambda_k^\top M_{T_k}\) belongs to the row space of \(M_{T_k}\). Since these row spaces are mutually independent, it follows that
\(\lambda_k^\top M_{T_k}=0\) for every \(k\).
Because each block is feasible, Prop.~\ref{prop:farkas} implies
\(\lambda_k^\top \beta_{T_k}(x)\ge 0\) for every \(k\). Summing over \(k\) gives
\(\lambda^\top\beta(x)\ge 0\), and Prop.~\ref{prop:farkas} yields feasibility of the full system.

The feasible-set identity follows immediately from the equivalence:
\(x\) belongs to the feasible set of the full system if and only if it belongs to the feasible set of every block.
\end{proof}

For bounded input sets, block independence alone is not sufficient for such a decomposition, because the input constraints may couple the decision variables across blocks. In that case, exact feasible-set characterization must be performed jointly with the admissible input set.

The previous results rely on independence or parallelism of the constraint
normals. In some applications, however, constraints may appear as
two-sided bounds along several directions that may be linearly dependent.
% The following result provides a feasibility certificate for such constraints.
To state feasibility results for such constraints,
consider the interval constraints:
\begin{equation}
s_i^{\min}(x)\le v_i^\top u\le s_i^{\max}(x),
\qquad i=1,\dots,p,
\label{eq:box_dir_constraints}
\end{equation}
where \(v_i\in\mathbb R^m\) and \(s_i^{\min}(x)\le s_i^{\max}(x)\).
Let \(I\subseteq\{1,\dots,p\}\) be such that \(\{v_i\}_{i\in I}\) are linearly independent,
and suppose that for each \(j\notin I\):
\begin{equation}
v_j=\sum_{i\in I}\eta_{ji}v_i .
\label{eq:dependent_direction_expansion}
\end{equation}
For \(i\in I\), let \(s_{i,j}^{+}(x)=s_i^{\max}(x)\) and
\(s_{i,j}^{-}(x)=s_i^{\min}(x)\) if \(\eta_{ji}\ge0\), and
\(s_{i,j}^{+}(x)=s_i^{\min}(x)\) and \(s_{i,j}^{-}(x)=s_i^{\max}(x)\)
if \(\eta_{ji}<0\).
The following characterizes feasibility of \eqref{eq:box_dir_constraints}.

\begin{proposition}
\label{prop:box_direction_certificate}
Consider the constraints in \eqref{eq:box_dir_constraints} and suppose that
\eqref{eq:dependent_direction_expansion} holds. If, for every \(j\notin I\):
\begin{equation}
s_j^{\min}(x)\le \sum_{i\in I}\eta_{ji}s_{i,j}^{-}(x),
~
s_j^{\max}(x)\ge \sum_{i\in I}\eta_{ji}s_{i,j}^{+}(x),
\label{eq:dependent_interval_condition}
\end{equation}
then the constraints in \eqref{eq:box_dir_constraints} are feasible at \(x\).
\end{proposition}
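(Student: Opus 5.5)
The plan is a direct construction: realize the independent directions at suitable values, then check that each dependent direction is automatically within its bounds. Since \(\{v_i\}_{i\in I}\) are linearly independent and \(\mathcal U=\mathbb R^m\), the linear map \(u\mapsto (v_i^\top u)_{i\in I}\) is surjective onto \(\mathbb R^{|I|}\). So for any prescribed targets \(\tau_i\), \(i\in I\), there is a \(u\) with \(v_i^\top u=\tau_i\). One explicit choice is \(u=V^\top(VV^\top)^{-1}\tau\), where \(V\) stacks the \(v_i^\top\), \(i\in I\). Whenever \(\tau_i\in[s_i^{\min}(x),s_i^{\max}(x)]\), which is nonempty by assumption, the constraints indexed by \(I\) hold.

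It then remains to choose \(\tau\) so that every dependent constraint \(j\notin I\) is also satisfied. By \eqref{eq:dependent_direction_expansion},
\[
v_j^\top u=\sum_{i\in I}\eta_{ji}\tau_i .
\]
Fix \(j\). As \(\tau\) ranges over the box \(\prod_{i\in I}[s_i^{\min}(x),s_i^{\max}(x)]\), this linear function is minimized coordinatewise: take \(\tau_i=s_i^{\min}\) if \(\eta_{ji}\ge0\) and \(\tau_i=s_i^{\max}\) if \(\eta_{ji}<0\). The minimum value is exactly \(\sum_i\eta_{ji}s_{i,j}^{-}(x)\), and the maximum, attained symmetrically, is \(\sum_i\eta_{ji}s_{i,j}^{+}(x)\). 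Condition \eqref{eq:dependent_interval_condition} says
\[
[s_j^{\min},s_j^{\max}]\supseteq\Big[\sum_i\eta_{ji}s^-_{i,j},\ \sum_i\eta_{ji}s^+_{i,j}\Big],
\]
so the whole range of \(v_j^\top u\) over the box lies inside the \(j\)th admissible interval. Hence \emph{any} \(\tau\) in the box works simultaneously for all \(j\notin I\), for instance \(\tau_i=s_i^{\min}(x)\). This is the key point: because the condition is a containment of the full range, no coupling among the different \(j\) needs to be resolved.

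The only delicate step is this range computation. It is the same separable argument as in Prop.~\ref{prop:projection_formulas}, applied to the box of \(\tau\) with coefficients \(\eta_{ji}\). Some care is needed with possibly infinite endpoints \(s_i^{\min}=-\infty\) or \(s_i^{\max}=+\infty\). In that case I would pick a finite \(\tau_i\) in each interval and note that \(\sum_i\eta_{ji}\tau_i\) lies between the (possibly infinite) extreme sums, so the containment still gives the two inequalities. Finally, the constructed \(u\) satisfies all constraints in \eqref{eq:box_dir_constraints}, which proves feasibility.
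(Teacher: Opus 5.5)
Your proof is correct and follows essentially the same route as the paper: use linear independence of \(\{v_i\}_{i\in I}\) to pick any \(u\) with \(v_i^\top u\in[s_i^{\min}(x),s_i^{\max}(x)]\), then bound \(v_j^\top u=\sum_{i\in I}\eta_{ji}v_i^\top u\) termwise between \(\sum_{i\in I}\eta_{ji}s^-_{i,j}(x)\) and \(\sum_{i\in I}\eta_{ji}s^+_{i,j}(x)\), and conclude with \eqref{eq:dependent_interval_condition}. Phrasing this as the exact range over the box, giving the explicit right-inverse \(V^\top(VV^\top)^{-1}\tau\), and handling infinite endpoints are minor additions that leave the argument unchanged.
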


\begin{proof}
Since \(\{v_i\}_{i\in I}\) are linearly independent, for any prescribed scalars
\(\gamma_i\), \(i\in I\), there exists \(u\) such that \(v_i^\top u=\gamma_i\) for all
\(i\in I\). In particular, since \(s_i^{\min}(x)\le s_i^{\max}(x)\), the constraints
\(
s_i^{\min}(x)\le v_i^\top u\le s_i^{\max}(x),\; i\in I,
\)
are feasible. Let \(u^\star\) satisfy them.
Fix \(j\notin I\). By \eqref{eq:dependent_direction_expansion},
\(
v_j^\top u^\star=\sum_{i\in I}\eta_{ji}v_i^\top u^\star.
\)
For each \(i\in I\), if \(\eta_{ji}\ge 0\), then
\(
\eta_{ji}s_i^{\min}(x)\le \eta_{ji}v_i^\top u^\star\le \eta_{ji}s_i^{\max}(x),
\)
whereas if \(\eta_{ji}<0\), multiplication by \(\eta_{ji}\) reverses the inequalities:
\(
\eta_{ji}s_i^{\max}(x)\le \eta_{ji}v_i^\top u^\star\le \eta_{ji}s_i^{\min}(x).
\)
By the definition of \(s_{i,j}^{-}(x)\) and \(s_{i,j}^{+}(x)\), both cases can be written as
\(
\eta_{ji}s_{i,j}^{-}(x)\le \eta_{ji}v_i^\top u^\star\le \eta_{ji}s_{i,j}^{+}(x).
\)
Summing over \(i\in I\) yields
\(
\sum_{i\in I}\eta_{ji}s_{i,j}^{-}(x)
\le v_j^\top u^\star\le
\sum_{i\in I}\eta_{ji}s_{i,j}^{+}(x).
\)
Using \eqref{eq:dependent_interval_condition}, we have
\(
s_j^{\min}(x)\le v_j^\top u^\star\le s_j^{\max}(x).
\)
Thus \(u^\star\) satisfies 
% all constraints in 
\eqref{eq:box_dir_constraints}, so
\eqref{eq:box_dir_constraints} is feasible at \(x\).
\end{proof}

%\begin{proof}
%The constraints indexed by $I$ define feasible intervals along the independent directions $\{v_i\}_{i\in I}$. Let $u^\star$ satisfy these constraints. For any $j\notin I$ we have \( v_j^\top u^\star = \sum_{i\in I}\eta_{ji} v_i^\top u^\star. \) Using the bounds on $v_i^\top u^\star$ and the definitions of $\tilde s_{i,j}^{+}$ and $\tilde s_{i,j}^{-}$ yields \( s_j^{\min}(x) \le v_j^\top u^\star \le s_j^{\max}(x). \) Thus $u^\star$ satisfies all constraints in \eqref{eq:box_dir_constraints}. \end{proof}

Prop.~\ref{prop:box_direction_certificate} shows that dependent directions need
not destroy feasibility. Once feasibility is ensured along an independent 
direction,
% set of directions, 
\eqref{eq:dependent_interval_condition} guarantees compatibility of each dependent interval with the range induced by the independent ones.

\section{Closed-Form Safety Filters}
\label{sec:closed_form}

This section derives explicit safety filters for the structured constraint
classes identified in Sec.~\ref{sec:feasibility}. In these cases,
\eqref{eq:safety_qp} admits a closed-form solution and requires no online
optimization. The resulting controllers involve only matrix operations and
saturation functions. We begin with the case where all relevant constraints act
along a single direction.

\subsection{Parallel Constraints}

Suppose the constraints in \eqref{eq:stacked} indexed by
\(T\subseteq\{1,\dots,p\}\) have parallel normals. By
Theorem~\ref{thm:parallel_constraints}, they reduce to the single interval
constraint:
\begin{equation}
    \underline s_T(x)\le v^\top u\le \overline s_T(x),
    \label{eq:ineq}
\end{equation}
where \(\underline s_T(x)\) and \(\overline s_T(x)\) are defined in
\eqref{eq:parallel_interval_definition}. The safety filter therefore modifies
the nominal input only along \(v\), so that \(v^\top u\) lies in the admissible
interval.

\begin{proposition}
\label{prop:parallel_filter}
Suppose the constraints in \eqref{eq:stacked} indexed by \(T\) have parallel
normals, and let \(\underline s_T(x)\) and \(\overline s_T(x)\) be defined by
\eqref{eq:parallel_interval_definition}. If \(\mathcal U=\mathbb R^m\) and
\(
\underline s_T(x)\le \overline s_T(x),
\)
then the safety filter \eqref{eq:safety_qp} has the unique optimizer:
\begin{equation}
u^\star(x)
=
u_d(x)
+
\frac{\epsilon^\star(x)-\epsilon_d(x)}
     {v^\top G^{-1}v}\,G^{-1}v,
\label{eq:parallel_filter}
\end{equation}
where:
\begin{equation*}
\epsilon_d(x)=v^\top u_d(x),
\qquad
\epsilon^\star(x)=
\mathrm{sat}_{[\underline s_T(x),\,\overline s_T(x)]}\!\big(\epsilon_d(x)\big).
\end{equation*}
\end{proposition}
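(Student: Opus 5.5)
By Theorem~\ref{thm:parallel_constraints}, with \(\mathcal U=\mathbb R^m\) the constraints indexed by \(T\) are equivalent to the single interval constraint \eqref{eq:ineq}. The QP \eqref{eq:safety_qp_obj}--\eqref{eq:safety_qp} therefore reduces to minimizing \(\tfrac12(u-u_d)^\top G(u-u_d)\) subject to \(\underline s_T(x)\le v^\top u\le \overline s_T(x)\). Here we take the constraints in \(T\) to be all the constraints, as the statement intends. The feasible set is nonempty by the hypothesis \(\underline s_T(x)\le\overline s_T(x)\) and Theorem~\ref{thm:parallel_constraints}(i). It is also closed and convex, and the objective is strictly convex since \(G\succ0\). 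Hence a unique optimizer exists, and it only remains to identify it.

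I will solve the problem in two stages. First, fix a scalar \(\epsilon\) and minimize the objective over the hyperplane \(v^\top u=\epsilon\). This is an equality-constrained least-squares problem in the \(G\)-norm. Its stationarity condition is \(G(u-u_d)=\mu v\), so \(u=u_d+\mu G^{-1}v\). Imposing \(v^\top u=\epsilon\) gives \(\mu=(\epsilon-\epsilon_d)/(v^\top G^{-1}v)\), and the denominator is positive because \(G^{-1}\succ0\) and \(v\neq0\). Substituting back, the optimal value equals \(\tfrac12(\epsilon-\epsilon_d)^2/(v^\top G^{-1}v)\).

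Second, minimize this value over \(\epsilon\in[\underline s_T(x),\overline s_T(x)]\). This is the squared distance from \(\epsilon_d\) to an interval, which is minimized uniquely at \(\epsilon^\star=\mathrm{sat}_{[\underline s_T,\overline s_T]}(\epsilon_d)\). This holds even when one or both endpoints are infinite, given the conventions in \eqref{eq:parallel_interval_definition}. Combining the two stages yields \eqref{eq:parallel_filter}. The two-stage minimization is valid because \(\min_u=\min_\epsilon\min_{u:v^\top u=\epsilon}\).

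As an alternative check, I can verify the KKT conditions directly. Set \(u^\star\) as in \eqref{eq:parallel_filter}, with multiplier \(\mu=(\epsilon^\star-\epsilon_d)/(v^\top G^{-1}v)\). If \(\epsilon_d<\underline s_T\), then \(\mu>0\) and the lower bound is active. If \(\epsilon_d>\overline s_T\), then \(\mu<0\) and the upper bound is active. Otherwise \(\mu=0\). In each case, the sign of \(\mu\) matches the active side, so the KKT conditions hold. Since the problem is a convex QP, these conditions are sufficient for optimality.

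The only delicate point is the degenerate case \(\underline s_T=\overline s_T\), together with infinite endpoints. In that case the interval collapses to a hyperplane, or becomes one-sided. The two-stage argument covers these cases without modification.
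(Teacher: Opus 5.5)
Your proof is correct, and its main route differs from the paper's. The paper writes only the KKT stationarity condition and concludes $u=u_d+\lambda G^{-1}v$. It then asserts, without justification, that at the optimum $v^\top u$ equals the projection of $\epsilon_d(x)$ onto $[\underline s_T(x),\overline s_T(x)]$, and solves for $\lambda$. Your two-stage minimization instead computes the partial minimum over each hyperplane $v^\top u=\epsilon$ explicitly, giving the reduced cost $\tfrac12(\epsilon-\epsilon_d)^2/(v^\top G^{-1}v)$. This reduced cost is exactly what justifies the projection claim the paper leaves implicit. It is the same substitution-into-the-objective reduction the paper uses later in the proof of Prop.~\ref{prop:block_filter}. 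Your alternative KKT check matches the sign of $\mu$ to the active bound, so it is the paper's argument completed with dual feasibility and complementary slackness, closed by sufficiency for a convex QP. The paper's version is shorter. Yours establishes existence constructively rather than only uniqueness, and covers the collapsed ($\underline s_T=\overline s_T$) and one-sided cases explicitly. It also states openly the assumption, tacit in the paper, that $T$ comprises all constraints of the QP.
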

% \smallskip

\begin{proof}
By Theorem~\ref{thm:parallel_constraints}, the constraints indexed by \(T\)
are equivalent to
\eqref{eq:ineq}.
Hence, \eqref{eq:safety_qp} reduces to:
\[
\min_u \ \tfrac12 (u-u_d(x))^\top G (u-u_d(x))
\;
\text{s.t.}\;
\eqref{eq:ineq} \text{ holds.}
\]
Since \(G\succ0\), the objective is strictly convex, so the optimizer is unique.
The Karush--Kuhn--Tucker (KKT) conditions~\cite[Ch. 5.5]{Boyd} give
\(
G(u-u_d)-v\lambda=0
\)
for some scalar multiplier \(\lambda\), and thus
\(
u=u_d+G^{-1}v\,\lambda.
\)
Premultiplying by \(v^\top\) yields:
\[
v^\top u=v^\top u_d+\lambda\,v^\top G^{-1}v
=\epsilon_d+\lambda\,v^\top G^{-1}v.
\]
At the optimum, \(v^\top u\) is the projection of \(\epsilon_d(x)\) onto
\([\underline s_T(x),\overline s_T(x)]\), i.e.,
\(
v^\top u=\epsilon^\star(x).
\)
Solving for \(\lambda\) and substituting into the expression for \(u\) gives
\eqref{eq:parallel_filter}.
\end{proof}

Prop.~\ref{prop:parallel_filter} shows that for parallel constraints, the
safety filter is obtained by saturating the nominal input along a single
direction. In Ex.~\ref{ex:double_integrator}, this reduces to saturation of
the scalar nominal input onto the interval
\([\underline s_T(x),\overline s_T(x)]\).

\subsection{Independent Interval Blocks}

We now consider the case where the constraints split into several
independent groups, each reducing to one interval constraint. Here,
the feasibility domain is described by:
\begin{equation}
\underline s(x)\le Su\le \overline s(x),
\label{eq:interval_blocks}
\end{equation}
where the rows of \(S\in\mathbb R^{\iota\times m}\) are linearly
independent, and \(\underline s(x),\overline s(x)\in\mathbb R^\iota\)
collect the lower and upper bounds of the merged intervals. Geometrically,
\eqref{eq:interval_blocks} defines a box in the coordinates \(Su\). The next
result shows that if the quadratic cost is chosen compatibly with these
coordinates, the safety filter reduces to componentwise saturation in \(Su\).

\begin{proposition}
\label{prop:block_filter}
Suppose the constraints in \eqref{eq:stacked} reduce to
\eqref{eq:interval_blocks}, where \(S\in\mathbb R^{\iota\times m}\) has
linearly independent rows, \(\mathcal U=\mathbb R^m\), and \(G\succ0\) satisfies:
\begin{equation}
SG^{-1}S^\top = I.
\label{eq:G_condition}
\end{equation}
Define:
\begin{equation}
\epsilon_d(x)=Su_d(x),
\qquad
\epsilon^\star(x)=
\mathrm{sat}_{[\underline s(x),\,\overline s(x)]}\!\big(\epsilon_d(x)\big).
\label{eq:eps}
\end{equation}
Then the optimizer of \eqref{eq:safety_qp} is:
\begin{equation}
u^\star(x)
=
u_d(x)
+
G^{-1}S^\top\big(\epsilon^\star(x)-\epsilon_d(x)\big).
\label{eq:block_filter}
\end{equation}
\end{proposition}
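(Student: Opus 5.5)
The plan is to reduce the QP to a separable problem in the coordinates $\epsilon = Su$ and then read off the solution from the KKT conditions, in the same way as the proof of Prop.~\ref{prop:parallel_filter}. First, by hypothesis the feasible set of \eqref{eq:safety_qp} with $\mathcal U=\mathbb R^m$ is exactly $\{u:\underline s(x)\le Su\le\overline s(x)\}$. The componentwise saturation in \eqref{eq:eps} presupposes $\underline s(x)\le\overline s(x)$. Since $S$ has full row rank, $Su$ can take any value in $\mathbb R^\iota$, so this set is nonempty; this is also Prop.~\ref{prop:independent_constraints}. Because $G\succ0$, the objective is strictly convex and the constraints are affine. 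Hence a unique optimizer exists, and the KKT conditions are necessary and sufficient. It therefore suffices to exhibit multipliers certifying the candidate \eqref{eq:block_filter}.

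Feasibility of the candidate is checked by premultiplying \eqref{eq:block_filter} by $S$ and using \eqref{eq:G_condition}:
\[
Su^\star = Su_d + SG^{-1}S^\top(\epsilon^\star-\epsilon_d) = \epsilon^\star .
\]
This lies in $[\underline s(x),\overline s(x)]$ by construction of the saturation. For stationarity, write the constraints as $Su\le\overline s$ and $-Su\le-\underline s$, with multipliers $\mu^+,\mu^-\ge0$. The Lagrangian gradient condition is
\[
G(u-u_d)+S^\top(\mu^+-\mu^-)=0 .
\]
The candidate satisfies this with $\mu^+-\mu^-=-(\epsilon^\star-\epsilon_d)$. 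We then choose the multipliers componentwise as $\mu_k^+=\max\{\epsilon_{d,k}-\overline s_k,0\}$ and $\mu_k^-=\max\{\underline s_k-\epsilon_{d,k},0\}$. These are nonnegative, and their difference equals $\epsilon_{d,k}-\epsilon^\star_k$ in each of the three saturation cases.

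The remaining step is complementary slackness, which is the only point needing care. We verify it componentwise. If $\mu_k^+>0$, then $\epsilon_{d,k}>\overline s_k$, so $\epsilon^\star_k=\overline s_k$ and the upper constraint is active. Symmetrically, if $\mu_k^->0$, the lower constraint is active. In the degenerate case $\underline s_k=\overline s_k$, both constraints are active, so no conflict arises.

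The main obstacle, and the reason \eqref{eq:G_condition} is needed, is decoupling. Without $SG^{-1}S^\top=I$, the map from multipliers to $Su$ would be $SG^{-1}S^\top$, which mixes components, and componentwise saturation would fail to be optimal. An alternative check is to substitute $u=u_d+G^{-1}S^\top\lambda+w$ with $Sw=0$. The cost is then $\tfrac12\|\lambda\|^2+\tfrac12 w^\top Gw$, since the cross term $\lambda^\top SG^{-1}Gw=\lambda^\top Sw=0$. The constraint becomes $\underline s\le\epsilon_d+\lambda\le\overline s$, so the problem separates: $w=0$, and $\lambda$ is the Euclidean projection of $0$ onto the shifted box, giving $\lambda=\epsilon^\star-\epsilon_d$. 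This decomposition is valid because every $u$ can be written uniquely in this form: $\mathbb R^m=\operatorname{range}(G^{-1}S^\top)\oplus\ker S$, which holds since $SG^{-1}S^\top=I$ is invertible. I would present the KKT argument and cite uniqueness from strict convexity.
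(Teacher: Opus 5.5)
Your proof is correct, but your main argument runs in the opposite direction from the paper's. The paper \emph{derives} the formula. It uses only the stationarity condition to write \(u=u_d+G^{-1}S^\top\lambda\), identifies \(\lambda=\epsilon-\epsilon_d\) via \(SG^{-1}S^\top=I\), and substitutes into the cost to reduce the QP to the Euclidean projection of \(\epsilon_d(x)\) onto \([\underline s(x),\overline s(x)]\). It never introduces sign-constrained multipliers or checks complementary slackness. You instead take the stated formula as a candidate and certify it through all four KKT conditions, using convexity for sufficiency and strict convexity for uniqueness.

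Your version is fully rigorous and produces explicit multipliers \(\mu_k^+=\max\{\epsilon_{d,k}-\overline s_k,0\}\) and \(\mu_k^-=\max\{\underline s_k-\epsilon_{d,k},0\}\), which identify the active constraints. Its cost is that it presupposes the answer. The paper's version explains where the saturation comes from, but it compresses one step: substituting the stationarity form into the cost tacitly restricts attention to stationary points. That is legitimate only because the optimizer exists, must lie in that affine family, and the restricted problem has a unique minimizer.

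Your alternative decomposition \(u=u_d+G^{-1}S^\top\lambda+w\) with \(Sw=0\) is essentially the paper's reduction made airtight. It parametrizes all of \(\mathbb R^m\) rather than only stationary points, and it shows \(w=0\) directly, with no appeal to KKT.

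One small refinement: the splitting \(\mathbb R^m=\operatorname{range}(G^{-1}S^\top)\oplus\ker S\) holds for every \(G\succ0\) once \(S\) has full row rank. So the identity \(SG^{-1}S^\top=I\) is needed there only to make the cost equal \(\tfrac12\|\lambda\|^2\), and hence separable.
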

% \smallskip

\begin{proof}
Let \(\epsilon=Su\). Then the constraints become
\(
\underline s(x)\le \epsilon\le \overline s(x).
\)
The KKT stationarity condition for \eqref{eq:safety_qp} gives
\(
u=u_d+G^{-1}S^\top\lambda
\)
for some multiplier \(\lambda\in\mathbb R^\iota\). Premultiplying by \(S\) and
using \eqref{eq:G_condition} yields:
\[
\epsilon=Su=Su_d+SG^{-1}S^\top\lambda
=\epsilon_d+\lambda\implies \lambda=\epsilon-\epsilon_d,
\]
implying that
% Hence \(\lambda=\epsilon-\epsilon_d\), and therefore
\(
u=u_d+G^{-1}S^\top(\epsilon-\epsilon_d).
\)
Substituting into the objective shows that the problem reduces to:
\[
\min_\epsilon \ \frac12\|\epsilon-\epsilon_d(x)\|^2
\quad
\text{s.t.}\quad
\underline s(x)\le \epsilon\le \overline s(x).
\]
Its unique minimizer is the componentwise projection
\(
\epsilon^\star(x)
\) 
in \eqref{eq:eps}.
Substituting \(\epsilon=\epsilon^\star(x)\) gives \eqref{eq:block_filter}.
\end{proof}

Prop.~\ref{prop:block_filter} shows that, after the change of coordinates
\(\epsilon=Su\), the safety filter reduces to componentwise saturation of the
nominal input. The filtered input is then recovered in the original
coordinates. Since \(S\) has full row rank, a matrix \(G\succ0\) satisfying
\(SG^{-1}S^\top=I\) always exists.

\begin{remark}
\label{rem:G_construction}
If \(S\in\mathbb R^{\iota\times m}\) has full row rank, a matrix
\(G\succ0\) satisfying \(SG^{-1}S^\top=I_\iota\) can be constructed
explicitly. If \(S\) is square and invertible, one may take:
\[
G=S^\top S.
\]
Otherwise, for any \(\tau>0\) we have:
\[
G^{-1}
=
S^\top(SS^\top)^{-2}S
+
\tau\bigl(I-S^\top(SS^\top)^{-1}S\bigr),
\]
which also satisfies \eqref{eq:G_condition}.
\end{remark}

The explicit formulas \eqref{eq:parallel_filter} and \eqref{eq:block_filter}
show that, for these structured constraints, the safety filter is piecewise
affine in the state and requires no online optimization.

%Prop.~\ref{prop:block_filter} shows that, after a change of coordinates \(\epsilon=Su\), the safety filter reduces to componentwise saturation of the nominal input. The filtered input is then recovered in the original coordinates. The matrix \(G\) always exists because \(S\) has full row rank. If \(S\) is square and invertible, one may take \(G=S^\top S. \) The explicit formulas \eqref{eq:parallel_filter} and \eqref{eq:block_filter} show that, for these structured constraints, the safety filter is piecewise affine in the state and requires no online optimization.

\section{Numerical Results}
\label{sec:numerical}

This section illustrates the feasibility-domain characterization and explicit safety filters developed above, showing how the geometry of the constraint normals shapes the feasibility domain and yields explicit filters in structured cases.

\subsection{Double-Integrator Example}

We revisit Ex.~\ref{ex:double_integrator} for the double integrator with
constraints:
\[
\begin{aligned}
h_1(x)&=x_1+x_2+1, & h_2(x)&=x_1+1, \\
h_3(x)&=-2x_2+5,   & h_4(x)&=x_1-3x_2+6, \\
h_5(x)&=-2x_1+5.
\end{aligned}
\]
The safe set is
\(
\mathcal C=\{x\in\mathbb R^2: h_i(x)\ge 0,\ i=1,\dots,5\}.
\)
We choose HOCBF gains \(\alpha=1\) for relative-degree-one constraints and
\((\alpha_1,\alpha_2)=(1,2)\) for relative-degree-two constraints. The relative
degrees are \((r_1,r_2,r_3,r_4,r_5)=(1,2,1,1,2)\), which gives
\(
\ell=(1,\,1,\,-2,\,-3,\,-2)^\top
\)
and:
\[
\begin{aligned}
\beta_1(x)&=x_1+2x_2+1, & \beta_2(x)&=2x_1+3x_2+2, \\
\beta_3(x)&=-2x_2+5,    & \beta_4(x)&=x_1-2x_2+6, \\
\beta_5(x)&=-4x_1-6x_2+10.
\end{aligned}
\]
Since the input is scalar, all constraints are parallel, so:
\[
\begin{aligned}
\underline s_T(x)
&=\max\{-x_1-2x_2-1,\,-2x_1-3x_2-2\},\\
\overline s_T(x)
&=\min\Big\{-x_2+2.5,\ \tfrac{x_1-2x_2+6}{3},\ -2x_1-3x_2+5\Big\}.
\end{aligned}
\]
Hence
\(
\mathcal X_T^{\mathrm u}=\{x:\underline s_T(x)\le \overline s_T(x)\},
\)
whereas for bounded input \(u\in[-2,2]\),
\(
\mathcal X_T^{\mathrm b}
=
\{x:[\underline s_T(x),\overline s_T(x)]\cap[-2,2]\neq\emptyset\}.
\)

For simulation, we use the nominal controller:
\[
u_d(x)=-K(x-x_{\rm ref}),\qquad x_{\rm ref}=[1\ 0]^\top,
\]
where \(K\) is the LQR gain corresponding to \(Q=I_2\) and \(R=0.1\).
Since this is a parallel-constraint case, the safety filter is given by the
explicit scalar saturation law in Prop.~\ref{prop:parallel_filter}.

Fig.~\ref{fig:doubleint_main} shows \(\mathcal C\), \(\mathcal S\),
\(\mathcal X_T^{\mathrm u}\), and \(\mathcal X_T^{\mathrm b}\), together with
representative closed-loop trajectories and the corresponding nominal and
filtered inputs. In the bounded-input case, actuator limits shrink the
feasibility domain from \(\mathcal X_T^{\mathrm u}\) to \(\mathcal X_T^{\mathrm b}\).
Fig.~\ref{fig:doubleint_vf} complements this picture by showing the
corresponding closed-loop vector fields over \(\mathcal X_T^{\mathrm u}\) and
\(\mathcal X_T^{\mathrm b}\). Comparing the explicit filter with the numerical
QP solution over the simulated trajectories gives
\(
\sup_{k,t}|u_{\mathrm{exp}}^{(k)}(t)-u_{\mathrm{QP}}^{(k)}(t)|
\)
on the order of \(10^{-12}\), confirming agreement up to numerical precision.
Fig.~\ref{fig:doubleint_compare} compares the explicit filter and QP solution.

\begin{figure}[t]
    \centering
    \includegraphics[height=0.97\columnwidth]{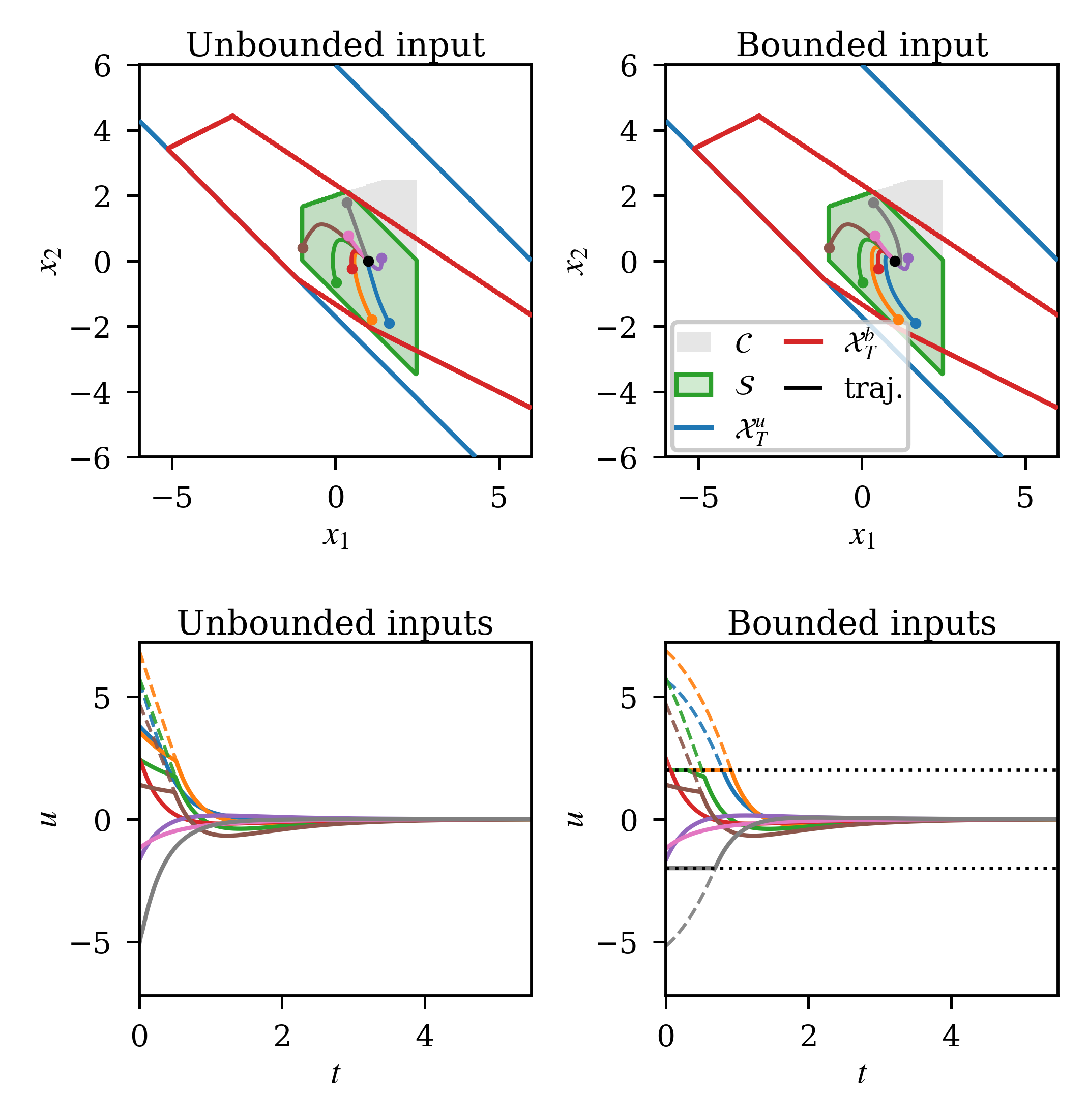}
    \vspace{-5mm}
    \caption{Double-integrator example. Top row: \(\mathcal C\), \(\mathcal S\), and the feasibility domains \(\mathcal X_T^{\mathrm u}\) and \(\mathcal X_T^{\mathrm b}\) with representative trajectories. Bottom row: nominal (dashed) and filtered (solid) inputs.}
    \label{fig:doubleint_main}
\end{figure}

\begin{figure}[t]
    \centering
    \includegraphics[width=.97\columnwidth]{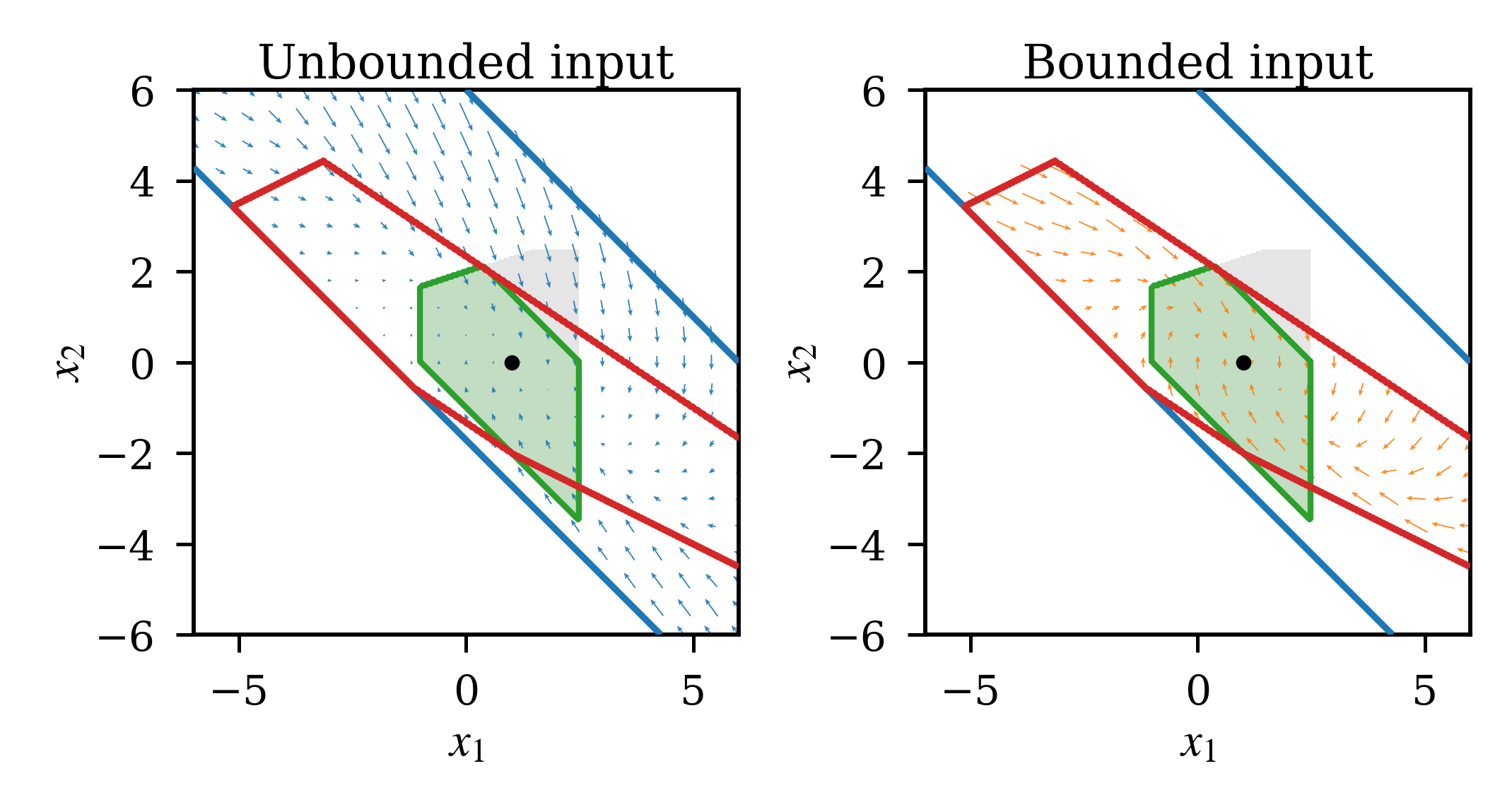}
    \vspace{-5mm}
    \caption{Closed-loop vector field over the feasibility domains. Left: \(\mathcal X_T^{\mathrm u}\). Right: \(\mathcal X_T^{\mathrm b}\).}
    \label{fig:doubleint_vf}
\end{figure}

\begin{figure}[t]
    \centering
    \includegraphics[height=0.5\columnwidth]{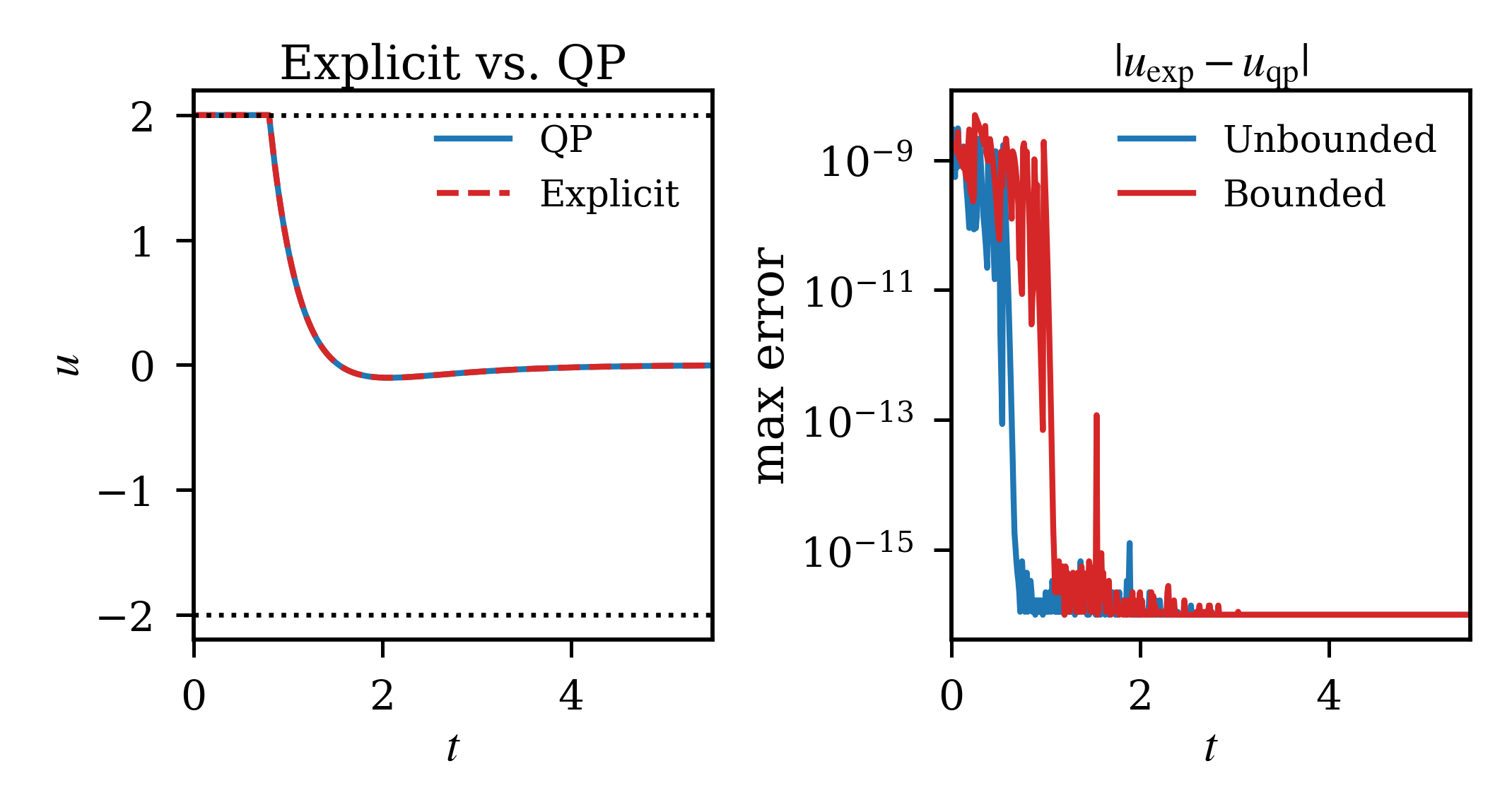}
    \vspace{-5mm}
    \caption{Explicit filter versus QP solution. Left: representative bounded-input trajectory. Right: maximum input error over the simulated trajectories.}
    \label{fig:doubleint_compare}
\end{figure}

%%%%%%%%%%%%%%%%

\subsection{2D Double Integrator with Independent Parallel Blocks}
\label{sec:quad_example}

\begin{figure}[t]
\centering
\includegraphics[height=0.93\columnwidth]{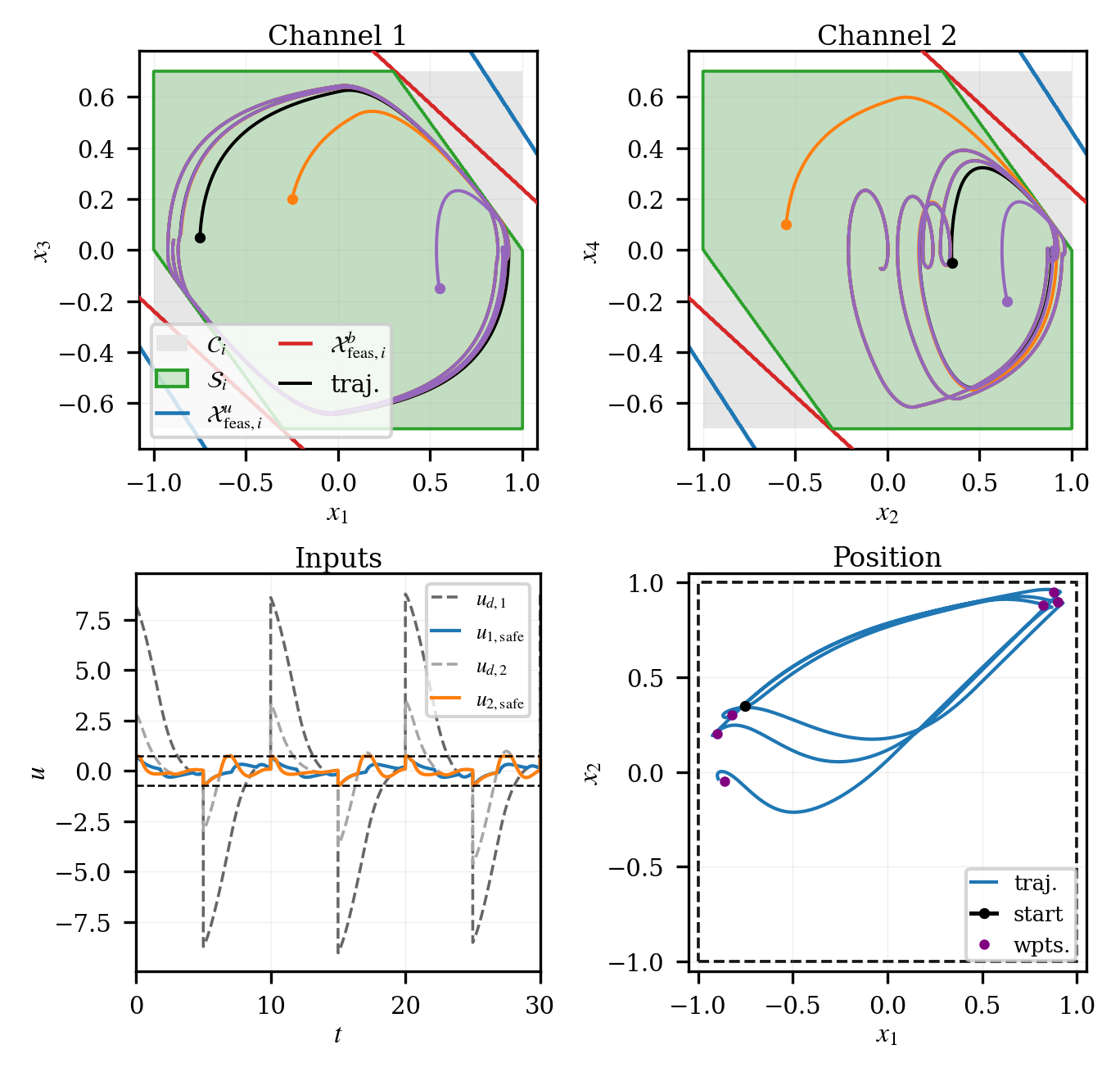}
\vspace{-5mm}
\caption{Two-dimensional double-integrator example. Top: phase-plane slices showing \(\mathcal C_i\), \(\mathcal S_i\), and the feasibility domains for the unbounded and bounded-input cases. Bottom-left: nominal and filtered inputs. Bottom-right: position with the safe box and waypoints.}
\label{fig:quadrotor_example}
\end{figure}

Consider the two-dimensional double integrator:
\begin{equation}
\ddot x_1=u_1,\qquad \ddot x_2=u_2,
\end{equation}
which can be viewed as a simplified linearized planar-quadrotor model near
hover, with state \(x=[x_1\ x_2\ x_3\ x_4]^\top\) and input
\(u=[u_1\ u_2]^\top\), where \(x_1,x_2\) are position states and
\(x_3,x_4\) are velocity states. We impose position and velocity bounds
\(x_{i,\min}\le x_i\le x_{i,\max}\) and
\(x_{i+2,\min}\le x_{i+2}\le x_{i+2,\max}\), \(i=1,2\).
This yields four safety constraints with only two inputs, illustrating a
non-square setting beyond several existing explicit constructions
(e.g., \cite{lavretsky2025servo,cohen2025compatibility}).

Using HOCBFs for the position bounds and relative-degree-one barrier
constraints for the velocity bounds, the constraints reduce to
\(
\underline s_i(x)\le u_i\le \overline s_i(x),\ i=1,2,
\)
where:
\begin{equation*}
    \begin{aligned}
        \underline s_i(x) = \max\{ & -a_i x_{i+2}-b_i(x_i\!-\!x_{i,\min}), \\
        & -\gamma_i(x_{i+2}-x_{i+2,\min})\} \\
        \overline s_i(x) = \min\{& -a_i x_{i+2}+b_i(x_{i,\max}-x_i), \\
        & \gamma_i(x_{i+2,\max}-x_{i+2})\}
    \end{aligned}
\end{equation*}
% \begin{align*}
% \underline s_i(x)
% &\!=\!
% \max\!\left\{
% -a_i x_{i+2}\!-\!b_i(x_i\!-\!x_{i,\min}),
% \!-\!\gamma_i(x_{i+2}\!-\!x_{i+2,\min})
% \right\},\\
% \overline s_i(x)
% &\!=\!
% \min\!\left\{
% \!-\!a_i x_{i+2}\!+\!b_i(x_{i,\max}\!-\!x_i),
% \gamma_i(x_{i+2,\max}\!-\!x_{i+2})
% \right\},
% \end{align*}
with \(a_i=\alpha_{i,1}+\alpha_{i,2}\) and
\(b_i=\alpha_{i,1}\alpha_{i,2}\).
Thus the constraints split into two independent parallel blocks: the first acts
along \(e_1\) and the second along \(e_2\). Hence, by
Theorems~\ref{thm:parallel_constraints} and \ref{thm:block_constraints}, in the
unbounded-input case \(\mathcal U=\mathbb R^2\) the constraints are feasible at
\(x\) if and only if
\(\underline s_i(x)\le \overline s_i(x)\), \(i=1,2\).
Moreover, since \(S=I_2\), Prop.~\ref{prop:block_filter} yields the explicit
filter:
\begin{equation}
u^\star(x)=
\begin{bmatrix}
\mathrm{sat}_{[\underline s_1(x),\,\overline s_1(x)]}
\!\big(u_{d,1}(x)\big)\\
\mathrm{sat}_{[\underline s_2(x),\,\overline s_2(x)]}
\!\big(u_{d,2}(x)\big)
\end{bmatrix}.
\label{eq:quad_explicit_filter}
\end{equation}
If box input bounds \(u_i\in[u_i^{\min},u_i^{\max}]\), \(i=1,2\), are added,
the general bounded-input block result does not apply directly. However, here
these bounds are aligned with the same directions \(e_1,e_2\), so
feasibility still decouples coordinatewise:
\begin{equation}
\max\{\underline s_i(x),u_i^{\min}\}
\le
\min\{\overline s_i(x),u_i^{\max}\},
\quad i=1,2.
\label{eq:quad_bounded_feas}
\end{equation}
With \(G=I_2\), the bounded-input safety filter is obtained by saturation onto
the tightened intervals:
\begin{equation}
u_i^\star(x)=
\mathrm{sat}_{[\ell_i(x),\,r_i(x)]}\big(u_{d,i}(x)\big),\qquad i=1,2,
\label{eq:quad_bounded_filter}
\end{equation}
where: \[\ell_i(x)=\max\{\underline s_i(x),u_i^{\min}\}, \quad
r_i(x)=\min\{\overline s_i(x),u_i^{\max}\}.\]

For simulations, we choose
\(x_{i,\min}=-1\), \(x_{i,\max}=1\),
\(x_{i+2,\min}=-0.7\), \(x_{i+2,\max}=0.7\),
\((\alpha_{i,1},\alpha_{i,2})=(1,2)\), and \(\gamma_i=1.2\),
with bounded inputs \(u_i\in[-0.72,0.72]\).
The nominal controller is:
\[
u_d(x,t)=K_P(p_d(t)-[x_1\ x_2]^\top)-K_D[x_3\ x_4]^\top,
\]
with \(K_P=5\), \(K_D=1.5\), and piecewise-constant waypoints \(p_d(t)\) near
the corners of the position box. Trajectories are simulated from initial
conditions in \(\mathcal S\) using a fourth-order Runge--Kutta scheme with step
size \(0.005\) over \(30\,\mathrm{s}\).

Fig.~\ref{fig:quadrotor_example} summarizes the example. The top row shows the
phase-plane slices \((x_i,x_{i+2})\), where the gray region is \(\mathcal C_i\),
the green polygon is \(\mathcal S_i\), and the blue and red contours are the
feasibility domains for the unbounded- and bounded-input cases. The trajectories
start in \(\mathcal S_i\) and remain in \(\mathcal C_i\). The bottom-left panel
shows the nominal and filtered inputs in the bounded-input case, and the
bottom-right panel shows the corresponding trajectory in the \((x_1,x_2)\)
plane. Comparing the explicit filter with the numerical QP solution yields
\(
\sup_{k,t}\,\|u_{\mathrm{exp}}^{(k)}(t)-u_{\mathrm{QP}}^{(k)}(t)\|
\)
on the order of \(10^{-12}\), i.e., agreement up to numerical precision.

\subsection{Aircraft Roll–Yaw Control}

We next illustrate the feasibility analysis using a linearized aircraft
model.
Consider the roll–yaw dynamics of a mid-size aircraft linearized around
an operating point (see \cite[Sec.~14.8]{EL-KAW:24}). The state is
\(
x_p=[\beta\; p_s\; r_s]^\top
\),
where \(\beta\) is the sideslip angle and \(p_s,r_s\) are the roll and
yaw rates, and the control input is
\(
u=[\delta_a\; \delta_r]^\top
\).
The dynamics are:
\[
\dot x_p=A_p x_p+B_pu,
\]
with:
\[\begin{aligned}
A_p&=\begin{bmatrix}
-0.1179&0.0009&-1.001\\
-7.0113&-1.4492&0.2206\\
6.3035&0.0651&-0.4117
\end{bmatrix},
\\
B_p&=\begin{bmatrix}
0&0.0153\\
-7.9662&2.6875\\
0.6093&-2.3577
\end{bmatrix}.\end{aligned}
\]
The regulated outputs are the roll rate and lateral load factor:
\[
y_{\mathrm{reg}}=
\begin{bmatrix}
p_s\\
N_y
\end{bmatrix}
=
C_px_p+D_pu,
\]
where
\[
C_p=\begin{bmatrix}
0&1&0\\
-2.6049&0.0187&0.0677
\end{bmatrix},
\;
D_p=\begin{bmatrix}
0&0\\
0&0.3370
\end{bmatrix}.
\]

A baseline LQR–PI controller (cf.~\cite[Section 4.4.1]{EL-KAW:24}) with 
\[Q = \text{diag}(10.25, 10.29, 0.0, 0.0, 16.02), \;R = \text{diag}(1, 1),\]
is used to track a command $y_{\text{cmd}}$ while enforcing
bounds on roll rate, yaw rate, and integrator states. 
We introduce a virtual control input $v$ for the integrator states as follows:
\begin{align*}
    \dot{e}_{yI} = y_{\text{reg}} - y_{\text{cmd}} + v.
\end{align*}
The augmented system in the $[e_{yI}, x_p]$ variables is:
{\begin{align*}
    \begin{bmatrix}
        \dot{e}_{yI} \\
        \dot{x}_p
    \end{bmatrix} \!= \!
    \begin{bmatrix}
        0_{m\times m} & C_p \\
        0_{n_p \times m} & A_p
    \end{bmatrix} 
    \begin{bmatrix}
        e_{yI} \\
        x_p
    \end{bmatrix}
   \! +\! 
\begin{bmatrix}
        D_{\text{p}} \\
        B_p
    \end{bmatrix} u
   % \\
    \!+\! 
\begin{bmatrix}
        -I_m \\
        0
    \end{bmatrix} (y_{\text{cmd}} - v).
\end{align*}}
These
requirements yield ECBF constraints of the form \eqref{eq:stacked}.
For the constraints associated with the regulated outputs and
integrator states, the corresponding direction vectors are linearly
independent. Thus, without actuator bounds, feasibility follows from
Proposition~\ref{prop:independent_constraints}.

Additional actuator limits introduce further constraint directions that
are dependent on the original ones. Feasibility can then be certified
using Proposition~\ref{prop:box_direction_certificate} by expressing
these additional directions as linear combinations of an independent
subset. In particular, the resulting feasibility conditions can be
verified over the region:
\[
\mathcal S=[-0.01,0.01]\times[-0.4,0.4]
\times[-0.02,0.02],
\]
which contains the closed-loop trajectories.
Fig.~\ref{fig:output-constrained-evolution} shows the constrained
response. The outputs remain within their prescribed bounds throughout
the maneuver, illustrating that the safety filter remains feasible and
successfully enforces the constraints along the closed-loop trajectory.

\begin{figure}[t]
    \centering
    \includegraphics[width=\linewidth]{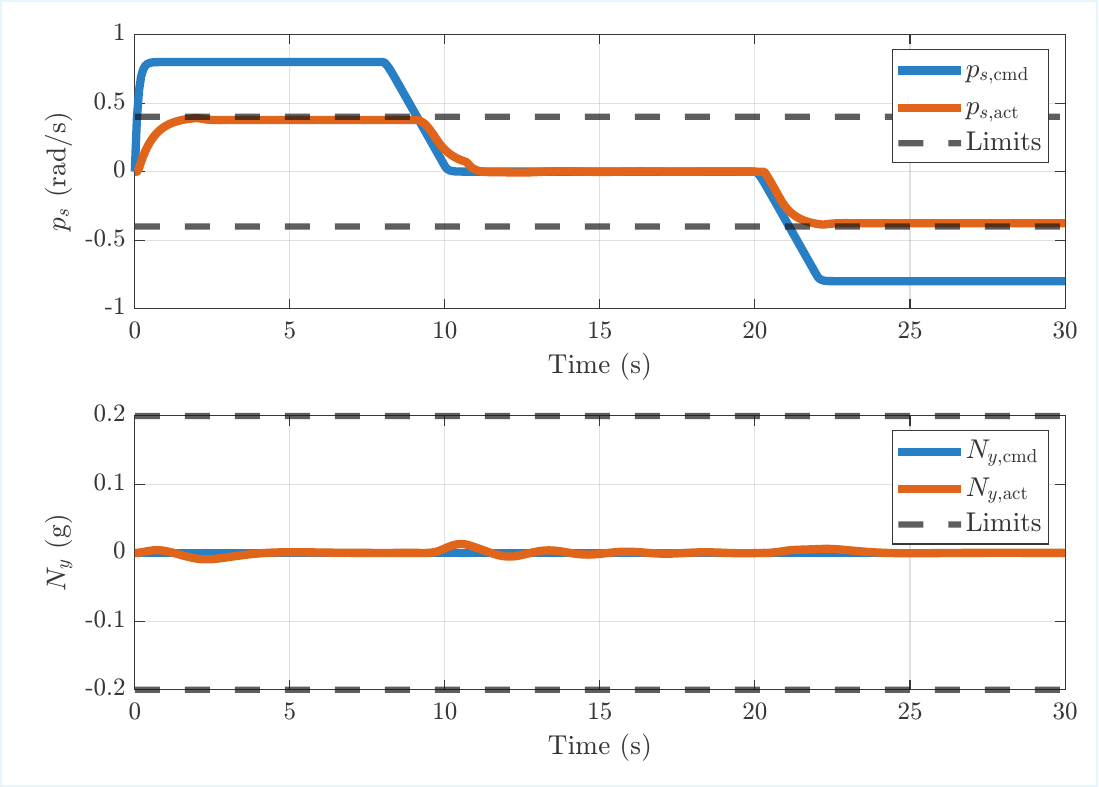}
       \caption{Closed-loop constrained outputs for the aircraft example. The safety filter enforces the prescribed bounds during tracking.}
    \label{fig:output-constrained-evolution}
\end{figure}

\section{Conclusion}\label{sec:conclusion}

We illustrated how structural properties of linear CBF constraints can be
exploited to characterize the feasibility domain of QP-based safety-filter and, in
structured cases, how to replace the QP with an explicit control law. Importantly, parallel and block-structured constraint geometries yield simple
feasibility tests, including under bounded inputs, and in some cases lead to
closed-form safety filters. The examples illustrate the resulting benefits for
offline analysis and low-complexity online implementation. Future work will extend
these ideas to nonlinear systems and settings with uncertainty and robustness requirements.

\bibliography{references}

@article{mestres2025explicit,
  title={Explicit Control Barrier Function-based Safety Filters and their Resource-Aware Computation},
  author={Mestres, Pol and Mousavi, Shima Sadat and Ong, Pio and Yang, Lizhi and Das, Ersin and Burdick, Joel W and Ames, Aaron D},
  journal={arXiv:2512.10118},
  year={2025}
}

@article{PM-SSM-ADA:26,
  title =        {Dynamical Properties of Control Barrier Function-based Safety Filters for Linear Systems and Affine Constraints},
  author =       {P. Mestres and S. S. Mousavi and A. D. Ames},
    journal={ arXiv:2603.17401},
  year =         2026
}

@article{mestres2025safe,
  title={Safe and Dynamically Feasible Motion Planning Using Control Lyapunov and Barrier Functions},
  author={Mestres, Pol and Nieto-Granda, Carlos and Cort{\'e}s, Jorge},
  journal={IEEE Trans. Robot.},
  volume={41},
  pages={6440--6459},
  year={2025},
}

@article{mousavi2025vertices,
  title={From Vertices to Convex Hulls: Certifying Set-Wise Compatibility for CBF Constraints},
  author={Mousavi, Shima Sadat and Tan, Xiao and Ames, Aaron D},
  journal={IEEE Control Syst. Lett.},
  volume={9},
  pages={3011--3016},
  year={2025},
}

@article{WX-CB:22,
  title =        {High-order control barrier functions},
  author =       {W. Xiao and C. Belta},
  journal =     {IEEE Trans. Autom. Control},
  volume =       67,
  number =       7,
  pages =        {3655-3662},
  year =         2022
}

@inproceedings{cohen2025compatibility,
  title={Compatibility of multiple control barrier functions for constrained nonlinear systems},
  author={Cohen, Max H and Lavretsky, Eugene and Ames, Aaron D},
  booktitle={IEEE Conf. Decis. Control},
  pages={771--778},
  year={2025},
}

@inproceedings{lavretsky2025servo,
  title={Servo-controllers for linear time-invariant systems with operational constraints},
  author={Lavretsky, Eugene and Menner, Marcel},
  booktitle={Amer. Control Conf.},
  pages={4909--4916},
  year={2025},
}

@Book{EL-KAW:24,
  author =       {E. Lavretsky and K. A. Wise},
  title =        {Robust and {A}daptive {C}ontrol with {A}erospace {A}pplications},
  year =         2024,
  publisher =    {Springer},
}

@article{AmesTAC17,
  author={Ames, Aaron D. and Coogan, Samuel and Egerstedt, Magnus and Notomista, Gennaro and Sreenath, Koushil and Tabuada, Paulo},
  journal={IEEE Trans. Autom. Control},
  title={Control Barrier Functions: Theory and Applications},
  year={2017},
  volume={62},
  number={8},
  pages={3861-3876}
}

@string{AUT = "Automatica"}

@string{LCSS = "IEEE Contr. Syst. Lett."}

@article{TamasLCSS23,
    author = {T. G. Molnar and A. D. Ames},
    title = {Composing Control Barrier Functions for Complex Safety Specifications},
    journal = {IEEE Control Syst. Lett.},
    year = {2023},
volume = {7},
pages = {3615--3620},
}

@article{GlotfelterLCSS17,
author = {P. Glotfelter and J. Cort\'{e}s and M. Egerstedt},
title = {Nonsmooth Barrier Functions With Applications to Multi-Robot Systems},
journal = LCSS,
volume = {1},
number = {2},
pages = {310--315},
year = {2017},
}

@article{AmesCSM23,
    author = {K. P. Wabersich and A. J. Taylor and J. J. Choi and K. Sreenath and C. J Tomlin and A. D. Ames and M. N. Zeilinger},
    title = {Data-driven safety filters: {Hamilton}-{Jacobi} reachability, control barrier functions, and predictive methods for uncertain systems},
    journal = {IEEE Control Syst. Mag.},
    year = {2023},
    volume = {43},
    number = {5},
    pages = {137--177},
}

@article{LarsLCSS19,
author = {L. Lindemann and D. V. Dimarogonas},
title = {Control Barrier Functions for Signal Temporal Logic Tasks},
journal = {IEEE Control Syst. Lett.},
volume = 3,
number = 1,
pages = {96--101},
year = 2019,
}

@article{CortezAutomatica22,
    author = {W. S. Cortez and D. V. Dimarogonas},
    title = {Safe-by-design control for {Euler}–{Lagrange} systems},
    journal = AUT,
    year = {2022},
    volume = {146},
    pages = {110620},
}

@inproceedings{BansalCDC17,
    author = {S. Bansal and M. Chen and S. Herbert and C. J. Tomlin},
    title = {{Hamilton}-{Jacobi} reachability: {A} brief overview and recent advances},
    booktitle = {IEEE Conf. Decis. Control},
    year = {2017},
    pages = {2242--2253},
}

@article{FisacARCRAS23,
    author = {K. C. Hsu and H. Hu and J. F. Fisac},
    title = {The safety filter: A unified view of safety-critical control in autonomous systems},
    journal = {Annu. Rev. Control Robot. Auton. Syst.},
    volume = {7},
    year = {2023},
}

@inproceedings{GurrietICCPS18,
  title={Towards a framework for realizable safety critical control through active set invariance},
  author={Gurriet, Thomas and Singletary, Andrew and Reher, Jacob and Ciarletta, Laurent and Feron, Eric and Ames, Aaron},
  booktitle={ACM/IEEE Int. Conf. Cyber-Phys. Syst.},
  pages={98--106},
  year={2018},
}

@article{XuAutomatica18,
    author = {X. Xu},
    title = {Constrained control of input–output linearizable systems using control sharing barrier functions},
    journal = AUT,
    year = {2018},
    volume = {87},
}

@book{Boyd,
    author = {S. Boyd and L. Vandenberghe},
    title = {Convex Optimization},
    publisher = {Cambridge Univ. Press},
    year = {2004}
}

@article{CortezLCSS22,
    author = {W. S. Cortez and X. Tan and D. V. Dimarogonas},
    title = {A Robust, Multiple Control Barrier Function Framework for Input Constrained Systems},
    journal = {IEEE Control Syst. Lett.},
    year = {2022},
    number = {6},
    pages = {1742--1747},
}

@inproceedings{TanCDC22,
    author = {X. Tan and D. V. Dimarogonas},
    title = {Compatibility checking of multiple control barrier functions for input constrained systems},
    booktitle = {IEEE Conf. Decis. Control},
    year = {2022},
    pages = {939-944},
}

@inproceedings{BreedenACC23,
    author = {J. Breeden and D. Panagou},
    title = {Compositions of Multiple Control Barrier Functions Under
Input Constraints},
    booktitle = {Amer. Control Conf.},
    year = {2023},
pages = {3688--3695},
}

@article{IsalyTAC24,
    author = {A. Isaly and M. Ghanbarpour and R. G. Sanfelice and W. E. Dixon},
    title = {On the Feasibility and Continuity of Feedback Controllers Defined by Multiple Control Barrier Functions},
    journal = {IEEE Trans. Autom. Control},
    volume = {69},
    number = {11},
    year = {2024},
    pages = {7326--7339},
}

@article{KolmanovskyAutomatica17,
    author = {E. Garone and S. Di Cairano and I. Kolmanovsky},
    title = {Reference and command governors for systems with constraints: A survey on theory and applications},
    journal = AUT,
    year = {2017},
    volume = {75},
    pages = {306--328},
}

@book{BorelliBemporadMorari,
    author = {F. Borelli and A. Bemporad and M. Morari},
    title = {Predictive control for linear and hybrid systems},
    publisher = {Cambridge Univ. Press},
    year = {2017}, 
}

\bibliographystyle{IEEEtran}

\end{document}